\newtheorem{theorem}{Theorem}
\newtheorem{lemma}[theorem]{Lemma}
\newtheorem{remark}{Remark}
\def\fract#1/#2{\leavevmode
 \kern.1em \raise .5ex \hbox{\the\scriptfont0 #1}%
 \kern-.1em $/$%
 \kern-.15em \lower .25ex \hbox{\the\scriptfont0 #2}%
}
\def\abs#1{\ensuremath{\lvert #1\rvert}} 
\DeclareRobustCommand\sfrac[1]{\@ifnextchar/{\@sfrac{#1}}%
                                            {\@sfrac{#1}/}}
\def\@sfrac#1/#2{\leavevmode\scalebox{.9}{\kern.1em\raise.5ex
         \hbox{$\m@th\mbox{\fontsize\sf@size\z@
                           \selectfont#1}$}\kern-.1em
         /\kern-.15em\lower.25ex
          \hbox{$\m@th\mbox{\fontsize\sf@size\z@
                            \selectfont#2}$}}}
\DeclareRobustCommand\numfrac[1]{\@ifnextchar/{\@numfrac{#1}}%
                                            {\@numfrac{#1}}}
\def\@numfrac#1{\leavevmode \hbox{$\m@th\mbox{\fontsize\sf@size\z@
                           \selectfont#1}$}}
\newcommand{\nat}{\mathbb N}
\newcommand{\tuple}[1]{\langle #1 \rangle}
\newcommand{\true}{{\sf true}}
\renewcommand{\L}{{\cal L}}
\newcommand{\T}{{\mathcal T}}
\newcommand{\nextq}{\mathop{\mathcal X\!\!\;Q} }
\newcommand{\qnext}{\mathop{Q \mathcal X}}
\newcommand{\anext}{\mathop{\forall \mathcal X}}
\newcommand{\enext}{\mathop{\exists \mathcal X}}
\newcommand{\until}{\mathop{\mathcal U}}
\newcommand{\auntil}{\mathop{\forall \mathcal U}}
\newcommand{\euntil}{\mathop{\exists \mathcal U}}
\newcommand{\quntil}{\mathop{Q \mathcal U}}
\newcommand{\untila}{\mathop{\mathcal U\!\!\; \forall}}
\newcommand{\untile}{\mathop{\mathcal U\!\!\; \exists}}
\newcommand{\untilq}{\mathop{\mathcal U\!\!\;Q}}
\def\sg{\mathrel[\joinrel\mathrel[}
\def\sd{\mathrel]\joinrel\mathrel]}
\newcommand{\sem}[1]{\sg \mathrel{#1} \sd}
\title{{\bf Computation Tree Logic for Synchronization 
Properties}\footnote{This research was partially supported by Austrian Science Fund (FWF) 
NFN Grant No S11407-N23 (RiSE/SHiNE), ERC Start grant (279307: Graph Games), 
Vienna Science and Technology Fund (WWTF) through project ICT15-003,
and European project Cassting (FP7-601148).
}}
\author{
Krishnendu Chatterjee$^\dag$ \quad  Laurent Doyen$^{\S}$ \\ 
\normalsize
 $\strut^\dag$ IST Austria \quad $\strut^\S$ CNRS \& LSV, ENS Cachan 
}
\date{}
\begin{document}
\sloppy
\providecommand*{\donothing}[1]{}

\maketitle 
\pagestyle{plain}

\begin{abstract}
We present a logic that extends CTL (Computation Tree Logic) 
with operators that express synchronization properties.
A property is synchronized in a system if it holds in all 
paths of a certain length. 
The new logic is obtained by using the same path quantifiers 
and temporal operators as in CTL, but allowing a different order
of the quantifiers. This small syntactic variation induces 
a logic that can express non-regular properties for which known extensions
of MSO with equality of path length are undecidable.
We show that our variant of CTL is decidable and that the model-checking
problem is in $\Delta_3^{\text{P}} = \text{P}^{\text{NP}^{\text{NP}}}$, and is hard
for the class of problems solvable in polynomial time using a
parallel access to an NP oracle.
We analogously consider quantifier exchange in extensions of CTL, and 
we present operators defined using basic operators of CTL* that express
the occurrence of infinitely many synchronization points.
We show that the model-checking problem remains in $\Delta_3^{\text{P}}$.
The distinguishing power of CTL and of our new logic coincide
if the Next operator is allowed in the logics, thus the classical 
bisimulation quotient can be used for state-space reduction before model checking.
\end{abstract}

\section{Introduction}
In computer science, it is natural to view computations as a tree, where each branch 
represents an execution trace, and all possible execution traces are arranged in a tree. 
To reason about computations, the logical frameworks that express properties of 
trees have been widely studied~\cite{CGP01,Lenzi10,Thomas97}, such as CTL, CTL*, $\mu$-calculus, MSO, etc. 
These logics can express $\omega$-regular properties about trees.

A key advantage of logics is to provide concise and formal semantics, and 
a rigorous language to express properties of a system. 
For example, the logic CTL is widely used in verification tools such as NuSMV~\cite{CCGR00}, 
and hyperproperties, i.e. tree-based properties that cannot be defined over individual traces, 
are relevant in security~\cite{CFKMRS14,CS10}.

One key property that has been studied in different contexts is the property of 
synchronization, which intuitively requires that no matter how the system behaves 
it synchronizes to a common good point. 
Note that the synchronization property is inherently a tree-based property, 
and is not relevant for traces. 
Synchronization has been studied for automata~\cite{Volkov08,CMS16}, 
probabilistic models such as Markov decision processes~\cite{DMS14a,DMS14b}, 
as well as partial-information, weighted, and timed models~\cite{LLS14,KLLS15,DJLMS14}, 
and has a rich collection of results as well as open problems, e.g., \v{C}ern\'{y}'s 
conjecture about the length of synchronizing words in automata is one of the long-standing 
and well-studied problems in automata theory~\cite{Cer64,Volkov08}. 
A natural question is how can synchronization be expressed in a logical framework.

First, we show that synchronization is a property that is not $\omega$-regular. 
Hence it cannot be expressed in existing tree-based logics, such as MSO, CTL*, etc. 
A natural candidate to express synchronization in a logical framework is to consider 
MSO with quantification over path length. Unfortunately the quantification over path length in MSO
leads to a logic for which the model-checking problem is undecidable~\cite[Theorem 11.6]{Thomas90}. 
Thus an interesting question is how to express synchronization in a logical framework 
where the model-checking problem is decidable. 

\paragraph*{Contributions}
In this work we introduce an elegant logic, obtained by a natural variation of CTL. 
The logic allows to exchange the temporal and path quantifiers in classical CTL formulas. 
For example, consider the CTL formula $\forall F q$ expressing the property 
that in all paths there exists a position where $q$ holds (quantification pattern $\forall \text{paths} \cdot \exists \text{position}$). 
In our logic, the formula $F \forall q$ with quantifiers exchanged expresses that 
there exists a position $k$ such that for all paths, $q$ holds at position $k$ (quantification pattern $\exists \text{position} \cdot \forall \text{paths}$),
see \figurename~\ref{fig:eventually}.
Thus $q$ eventually holds in all paths at the same position, expressing that
the paths are eventually synchronized. 

We show that the model-checking problem is decidable for our logic,
which we show is in $\Delta_3^{\text{P}} = \text{P}^{\text{NP}^{\text{NP}}}$ (in the third 
level of the polynomial hierarchy) and is hard
for the class $\text{P}^{\text{NP}}_{\parallel}$ of problems solvable in polynomial time using a
parallel access to an NP oracle (Theorem~\ref{theo:CTL+Sync-complexity}).
The problems in $\text{P}^{\text{NP}^{\text{NP}}}$ can be solved by a 
polynomial-time algorithm that uses an oracle for a problem in $\text{NP}^{\text{NP}}$,
and the problems in $\text{NP}^{\text{NP}}$ can be solved by a non-deterministic
polynomial-time algorithm that uses an oracle for an NP-complete problem;
the problems in $\text{P}^{\text{NP}}_{\parallel}$ can be solved by a 
polynomial-time algorithm that works in two phases, where in the first phase
a list of queries is constructed, and in the second phase the queries are
answered by an NP oracle (giving a list of yes/no answers) and the algorithm
proceeds without further calling the oracle~\cite{Wagner87,Spako05}. 

We present an extension of our logic that can express the occurrence
of infinitely many synchronization points (instead of one as in eventually
sychronizing), and the absence of synchronization from some point on, 
with the same complexity status (Section~\ref{sec:extension}). 
These properties are the analogue of the classical liveness and co-liveness 
properties in the setting of synchronization. 
We show that such properties cannot be expressed in our basic logic (Section~\ref{sec:expressivity}).
In Section~\ref{sec:CTL*}, we consider the possibility to further extend 
our logic with synchronization to CTL*, and show that the exchange of quantifiers
in CTL* formulas would lead to either a counter-intuitive semantics,
or an artificial logic that would be inelegant.

We study the distinguishing power of the logics in Section~\ref{sec:distinguising},
that is the ability of the logics, given two models, to provide a formula
that holds in one model, and not in the other. The distinguishing power is
different from the expressive power of a logic, as two logics with the
same expressive power have the same distinguishing power but not vice versa.
The distinguishing power can be used for state-space reduction before running
a model-checking algorithm, in order to obtain a smaller equivalent model,
that the logic cannot distinguish from the original model, and thus for which 
the answer of the model-checking algorithm is the same.
We show that if the Next operator is allowed in the logic, then 
the distinguishing power coincides with that of CTL (two models are
indistinguishable if and only if they are bisimilar), and if 
the Next operator is not allowed, then the distinguishing power lies
between bisimulation and stuttering bisimulation, and is NP-hard to decide.
In particular, it follows that with or without the Next operator the state-space reduction 
with respect to bisimulation, which is computable in polynomial time, is sound 
for model-checking.

\section{CTL + Synchronization}

We introduce the logic CTL+Sync after presenting basic definitions 
related to Kripke structures.
A \emph{Kripke structure} is a tuple $K = \tuple{T, \Pi, \pi, R}$
where $T$ is a finite set of states, $\Pi$ is a finite set of atomic propositions,
$\pi: T \to 2^{\Pi}$ is a labeling function that maps each state~$t$ 
to the set $\pi(t)$ of propositions that are true at~$t$,
and $R \subseteq T \times T$ is a transition relation.
We denote by $R(t) = \{t' \mid (t,t') \in R\}$ the set of successors of a state~$t$ according to~$R$,
and given a set $s \subseteq T$ of states, let $R(s) = \bigcup_{t \in s} R(t)$.
A Kripke structure is \emph{deterministic} if $R(t)$ is a singleton for all 
states $t \in T$.
A \emph{path} in $K$ is an infinite sequence $\rho = t_0 t_1 \dots$ such that $(t_i,t_{i+1}) \in R$
for all $i \geq 0$. For $n \in \nat$, we denote by $\rho + n$ the suffix $t_n t_{n+1} \dots$.

\subsection{Syntax and semantics}\label{sec:ss}
In the CTL operators, a path quantifier always precedes the temporal quantifiers 
(e.g., $\euntil$ or $\auntil$). 
We obtain the logic CTL+Sync from traditional CTL by allowing to switch the order of
the temporal and path quantifiers. For example, the CTL formula $p \auntil q$
holds in a state~$t$ if for all  paths $(\forall)$ from $t$, there is a position
where $q$ holds, and such that $p$ holds in all positions before $(\until)$.
In the CTL+Sync formula $p \untila q$, the quantifiers are exchanged, and
the formula holds in $t$ if there exists a position $k$, such that for all positions $j < k$ before $(\until)$,
in all paths $(\forall)$ from $t$, we have that $q$ holds at position $k$ and 
$p$ holds at position~$j$, see \figurename~\ref{fig:UA}.
Thus the formula $p \untila q$ requires that $q$ holds synchronously 
after the same number of steps in all paths, while the formula $p \auntil q$
does not require such synchronicity across several paths.

The syntax of the formulas in CTL+Sync is as follows:
$$\varphi ::= p \mid \lnot \varphi_1 \mid \varphi_1 \lor \varphi_2 \mid 
\qnext \varphi_1 \mid
\varphi_1 \quntil \varphi_2 \mid 
\varphi_1 \untilq \varphi_2 
$$
where $p \in \Pi$ and $Q \in \{\exists, \forall\}$.
We define $\true$ and additional Boolean connectives as usual, 
and let 
\begin{itemize}
\item $\exists F \varphi \equiv \true \euntil \varphi$, and $F \exists \varphi \equiv \true \untile \varphi$, etc. 
\item $\exists G \varphi \equiv \lnot \forall F \lnot \varphi$, etc. 
\end{itemize}

Note that the Next operators $\qnext$ has only one quantifier,
and thus there is no point in switching quantifiers or defining
an operator $\nextq$.  

Given a Kripke structure $K = \tuple{T, \Pi, \pi, R}$, and a state $t \in T$, 
we define the satisfaction relation $\models$ as follows. The first cases
are standard and exist already in CTL:

\begin{itemize}
\item $K,t \models p$ if $p \in \pi(t)$.

\item $K,t \models \lnot \varphi_1$ if $K,t  \not\models \varphi_1$.

\item $K,t \models \varphi_1 \lor \varphi_2$ if $K,t \models \varphi_1$ or $K,t \models \varphi_2$.

\item $K,t \models \enext \varphi_1$ if $K,t' \models \varphi_1$ for some $t' \in R(t)$.

\item $K,t \models \anext \varphi_1$ if $K,t' \models \varphi_1$ for all $t' \in R(t)$.
\end{itemize}

\noindent The interesting new cases are built using the until operator of CTL:

\begin{itemize}

\item $K,t \models \varphi_1 \euntil \varphi_2$ if there exists a path $t_0 t_1 \dots$ in $K$ with $t_0 = t$ and 
there exists $k \geq 0$ such that:
$K,t_k \models \varphi_2$, and $K,t_j \models \varphi_1$ for all $0 \leq j < k$.\smallskip

\item $K,t \models \varphi_1 \untile \varphi_2$ if there exists $k \geq 0$ such that for all $0 \leq j < k$, 
there exists a path $t_0 t_1 \dots$ in $K$ with $t_0 = t$ such that $K,t_j \models \varphi_1$ and $K,t_k \models \varphi_2$.\smallskip

\item $K,t \models \varphi_1 \auntil \varphi_2$ if for all paths $t_0 t_1 \dots$ in $K$ with $t_0 = t$,
there exists $k \geq 0$ such that:
$K,t_k \models \varphi_2$, and $K,t_j \models \varphi_1$ for all $0 \leq j < k$.\smallskip

\item $K,t \models \varphi_1 \untila \varphi_2$ if there exists $k \geq 0$ such that for all $0 \leq j < k$ and 
for all paths $t_0 t_1 \dots$ in $K$ with $t_0 = t$, we have $K,t_j \models \varphi_1$ and $K,t_k \models \varphi_2$.\smallskip
\end{itemize}

We often write $t \models \varphi$ (or $K \models \varphi$)  when the Kripke structure $K$ (or the initial state $t$) 
is clear from the context.
\begin{figure}[!tb]
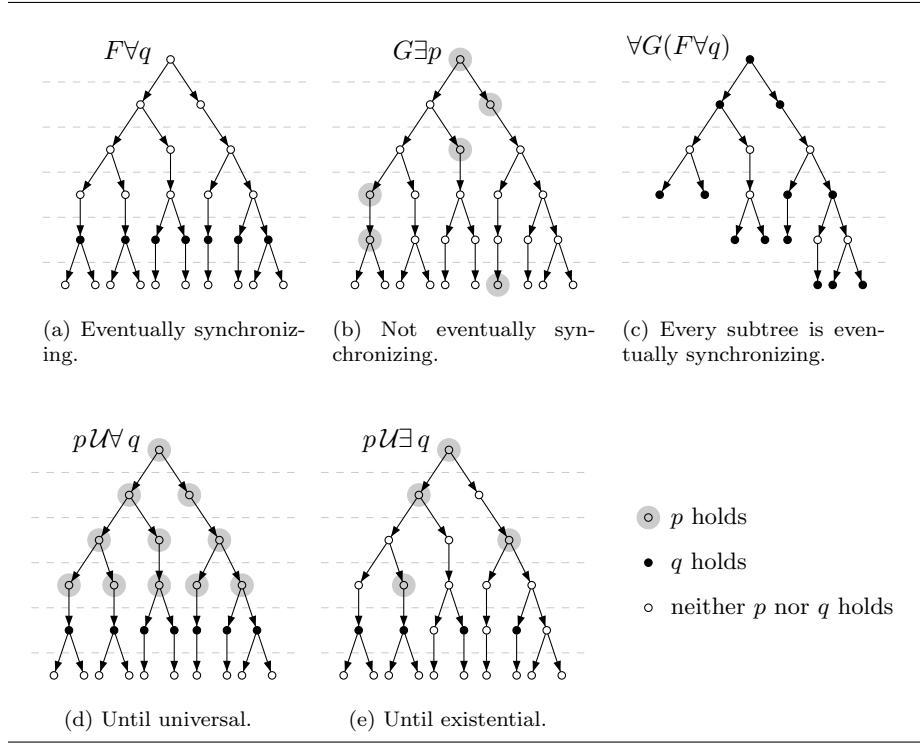
%
	\begin{center}
	\hrule
		\subfloat[Eventually synchronizing.\label{fig:eventually}]{\input{figures/exampleFA.tex}}%
		\quad
		\subfloat[Not eventually synchronizing.\label{fig:globally}]{\input{figures/exampleGE.tex}}
		\quad
		\subfloat[Every subtree is eventually synchronizing.\label{fig:subtrees}]{\input{figures/exampleAGFA.tex}}
		\quad
		\subfloat[Until universal. \label{fig:UA}]{\input{figures/exampleUA.tex}}
		\quad
		\subfloat[Until existential. \label{fig:UE}]{\input{figures/exampleUE.tex}}
		\quad
		\subfloat{\input{figures/example-caption.tex}}
	\smallskip
	\hrule
	\smallskip
		\caption{Formulas of CTL+Sync.\label{fig:example}}%
	\end{center}
\end{figure}
%
%
%
Examples of formulas are given in \figurename~\ref{fig:example}.
The examples show the first steps of the unravelling of Kripke structures
defined over atomic propositions $\{p,q\}$.
The formula $F \forall q$ expresses that $q$
eventually holds synchronously on all paths, after the same 
number of steps (\figurename~\ref{fig:eventually}). 
This is different from the CTL formula $\forall F q$, which expresses that all paths
eventually visit a state where $q$ holds, but not necessarily
after the same number of steps in all paths.
The dual formula $G \exists p$ requires that at every depth (i.e., for all positions $k$),
there exists a path where $p$ holds at depth $k$ (\figurename~\ref{fig:globally}). 
On the other hand note that $F \exists q \equiv \exists F q$ and dually $G \forall p \equiv \forall G p$.
Another example is the formula $\forall G (F \forall q)$ expressing
that every subtree is eventually synchronizing (\figurename~\ref{fig:subtrees}).
The until universal formula $p \untila q$ holds if $q$ holds at a certain position
in every path (like for the formula $F \forall q$), and $p$ holds in all
positions before (\figurename~\ref{fig:UA}). The until existential 
formula $p \untile q$ says that it is possible to find path(s)
where $q$ holds at the same position, and such that for all smaller positions 
there is one of those paths where $p$ holds at that position (\figurename~\ref{fig:UE}).

\begin{remark}\label{rmk:not-regular}
The definition of CTL+Sync, although very similar to the definition of CTL,
interestingly allows to define non-regular properties, thus not expressible in CTL (or even in MSO over trees).
It is easy to show using a pumping argument that the property $F \forall q$ 
of eventually sychronizing is not regular (\figurename~\ref{fig:eventually}).
This property of eventually sychronizing can be expressed in MSO extended with
a length predicate, by a formula such as $\exists \rho \in T^* \cdot \forall \rho' \in T^*:
\abs{\rho} = \abs{\rho'} \implies q(\rho')$ where $T= \{0,1\}$ and $q(\cdot)$ is a monadic predicate 
for the proposition $q$ over the binary tree $T^*$, where $q(\rho)$ means that $q$ holds in the last state of $\rho$.
However, model-checking for the logic MSO extended with the ``equal-length'' predicate $p$ defined by
$p(\rho, \rho') \equiv \abs{\rho} = \abs{\rho'}$ 
is undecidable~\cite[Theorem 11.6]{Thomas90}.
In contrast, we show in Theorem~\ref{theo:CTL+Sync-complexity} that the logic CTL+Sync is decidable.
\end{remark}

\subsection{Model-checking}\label{sec:mc}

Given a CTL+Sync formula $\varphi$, a Kripke structure $K$, and a state $t$, 
the \emph{model-checking problem for CTL+Sync} is to decide whether $K,t \models \varphi$ holds.

Model-checking of CTL+Sync can be decided by considering a powerset construction for
the Kripke structure, and evaluating a CTL formula on it. 
For example, to evaluate a formula $\varphi_1 \untila \varphi_2$ from state $t_I$ in a Kripke structure $K$, 
it suffices to consider the sequence $s_1 s_2 \dots$ defined by $s_1 = \{t_I\}$
and $s_{i+1} = R(s_i) $ for all $i \geq 1$, where a set $s$ is labeled by $\varphi_1$
if $K,t \models \varphi_1$ for all $t \in s$ (and analogously for $\varphi_2$).
The formula $\varphi_1 \untila \varphi_2$ holds in $t_I$ if and only if 
the formula $\varphi_1 \until \varphi_2$ holds in the sequence $s_1 s_2 \dots$
(note that on a single sequence the operators $\auntil$ and $\euntil$ are 
equivalent, thus we simply write $\until$).

For the formula $\varphi_1 \untile \varphi_2$, intuitively it 
holds in $t_I$ if there exists a set $P$ of finite paths $\rho_1, \rho_2, \dots, \rho_n$ 
from $t_I$ in $K$, all of the same length $k$, such that $\varphi_2$ holds in the last state of $\rho_i$
for all $1 \leq i \leq n$, and for every $1 \leq j < k$ there is a path $\rho_{i_j}$ such that 
$\varphi_1$ holds in the $j$th state of $\rho_{i_j}$. 
To evaluate $\varphi_1 \untile \varphi_2$ from $t_I$, we construct
the Kripke structure $2^K = \tuple{2^T, \{\varphi_1, \varphi_2\}, \pi, \hat{R}}$
where $(s,s') \in \hat{R}$ if for all $t \in s$ there exists $t' \in s'$ such that
$(t,t') \in R$, thus we have to choose (nondeterministically) at least one 
successor from each state in $s$, that is
for every set $P$ of paths $\rho_1, \rho_2, \dots, \rho_n$ as above,
there is a path $s_1, s_2, \dots, s_k$ (with $s_1 = \{t_I\}$) in $2^K$
where the sets $s_i$ are obtained by following simultaneously the 
finite paths $\rho_1, \dots, \rho_n$, thus such that $s_i$ is the set of states at position~$i$
of the paths in $P$.
The path $s_1, s_2, \dots, s_k$ in $2^K$ corresponds to a set $P$ of finite paths in $K$
that show that $\varphi_1 \untile \varphi_2$ holds if $(1)$ $\varphi_2$ holds in all states 
of $s_k$, and $(2)$~$\varphi_1$ holds in some state of $s_i$ ($i=1,\dots, k-1$).
Hence we define the labeling function $\pi$ in $2^K$ as follows: 
for all $s \in 2^T$ 
let $\varphi_2 \in \pi(s)$ if $K,t \models \varphi_2$ for all $t \in s$, and
let $\varphi_1 \in \pi(s)$ if $K,t \models \varphi_1$ for some $t \in s$.
Finally it suffices to check whether the CTL formula $\varphi_1 \euntil \varphi_2$ holds in $2^K$ from $\{t_I\}$.


This approach gives an exponential algorithm, and even a PSPACE algorithm 
by exploring the powerset construction on the fly. 
However, we show that the complexity of the model-checking problem is much below PSPACE. 
For example our model-checking algorithm for the formula $F \forall q$
relies on guessing a position $k \in \nat$ (in binary) and checking that $q$ holds on all paths at position~$k$. 
To compute the states reachable after exactly $k$ steps, we compute the $k$th
power of the transition matrix $M \in \{0,1\}^{T \times T}$ where $M(t,t') = 1$
if there is a transition from state $t$ to state $t'$. The power $M^k$ can be computed 
in polynomial time by successive squaring of $M$. For this formula, we obtain
an NP algorithm. For the whole logic, combining the guessing and squaring technique
with a dynamic programming algorithm that evaluates all subformlas, 
we obtain an algorithm in $\text{P}^{\text{NP}^{\text{NP}}}$ for the model-checking problem.
We present a hardness result for the class $\text{P}^{\text{NP}}_{\parallel}$ of
problems solvable in polynomial time using a parallel access to an NP oracle~\cite{Wagner87,Spako05}.

\begin{theorem}\label{theo:CTL+Sync-complexity}
The model-checking problem for CTL+Sync lies in $\text{P}^{\text{NP}^{\text{NP}}}$ and is $\text{P}^{\text{NP}}_{\parallel}$-hard.
\end{theorem}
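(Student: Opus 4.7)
The plan is to evaluate subformulas bottom-up: for every subformula $\psi$ and every state $t$ we compute a bit indicating whether $K,t \models \psi$, treating already-computed labels as fresh atomic propositions so that the outer connective can be processed in isolation. For the classical CTL fragment ($\lor,\lnot,\qnext,\euntil,\auntil$) this takes polynomial time by the standard labelling algorithm. The only nontrivial cases are the new operators $\untila,\untile$ (and, as derived forms, $F\forall,G\exists$), for which I will show that each single query lies in $\Sigma_2^P = \text{NP}^{\text{NP}}$. Since the bottom-up labelling issues polynomially many such calls, the overall procedure runs in $\text{P}^{\Sigma_2^P} = \text{P}^{\text{NP}^{\text{NP}}} = \Delta_3^P$.

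The essential ingredient is that the set $s_k = R^k(\{t\})$ of states reachable in \emph{exactly} $k$ steps is the support of the $k$-th power of the Boolean transition matrix $M$ (applied to the characteristic vector of $t$), and $M^k$ is computable in polynomial time by iterated squaring even when $k$ is presented in binary. Moreover, the sequence $(s_k)_{k \geq 0}$ lives in $2^T$ and is therefore ultimately periodic within at most $2^{|T|}$ steps, so whenever a synchronization depth exists, one exists of polynomial bit size. To decide $\varphi_1 \untila \varphi_2$ at $t$, I guess such a $k$, verify via $M^k$ in polynomial time that every state of $s_k$ satisfies $\varphi_2$, and certify the ``$\forall j<k$'' conjunct by a co-NP subquery asking whether some $j<k$ and some $t' \not\models \varphi_1$ satisfy $M^j(t,t')=1$; the resulting $\exists k \, \forall j$ predicate over polynomial-length guesses places the query in $\Sigma_2^P$. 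For $\varphi_1 \untile \varphi_2$ the pointwise reformulation ``for every $j<k$ there exist $t',t''$ with $M^j(t,t')=M^{k-j}(t',t'')=1$, $t' \models \varphi_1$, $t'' \models \varphi_2$'' has the same quantifier shape and yields the same upper bound.

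For the $\text{P}^{\text{NP}}_{\parallel}$ lower bound I plan to reduce from the canonical $\text{P}^{\text{NP}}_{\parallel}$-complete problem of evaluating $f(\mathrm{SAT}(\varphi_1),\dots,\mathrm{SAT}(\varphi_n))$ for $n$ non-adaptive SAT queries and a polynomial-time Boolean function $f$. For each CNF $\varphi_i$ I build a Kripke subgadget $K_{\varphi_i}$ in which paths of length $n_i$ enumerate all assignments of $\varphi_i$ and synchronously reach (or avoid) a distinguished proposition $q_i$ according to whether the assignment satisfies $\varphi_i$; the query $\mathrm{SAT}(\varphi_i)$ is then captured by a CTL+Sync subformula $\psi_i$ of the shape $F\forall q_i$ or its dual, depending on the polarity of the gadget. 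The $K_{\varphi_i}$ are glued at a common initial state to form a single polynomial-size Kripke structure $K$, and the final formula is the boolean combination of the $\psi_i$ prescribed by $f$. Because the $\psi_i$ are combined only by $\lor$ and $\lnot$ and are not nested inside a further synchronization operator, the overall formula faithfully mirrors the non-adaptive oracle access pattern characteristic of $\text{P}^{\text{NP}}_{\parallel}$.

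The main obstacle is the NP-hardness building block in the hardness reduction: the synchronization formula $F\forall q$ is inherently a universal statement (``all paths simultaneously reach $q$'') and naturally encodes \textsc{Tautology} rather than \textsc{Sat}, so the gadget $K_{\varphi_i}$ must be designed so that a \emph{satisfying} assignment is precisely what prevents a common $q$-synchronization, and the polarity must be flipped consistently before feeding the answers to $f$. A secondary delicate point, on the upper-bound side, is verifying that the periodicity bound $k \leq 2^{|T|}$ survives all levels of subformula nesting; this is immediate because the labelling phase only enlarges $\pi$ and never alters $T$ or $R$, so the pigeonhole argument applies to each invocation of the oracle.
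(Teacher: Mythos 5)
Your upper bound is sound and follows the paper's route: bottom-up labelling with each $\untila$/$\untile$ query answered by guessing the synchronization depth $k\leq 2^{\abs{T}}$ in binary, evaluating reachability at exact depths by iterated squaring of the transition matrix, and packaging the $\exists k\,\forall j$ quantifier pattern into an $\text{NP}^{\text{NP}}$ oracle call; the pointwise reformulation you give for $\untile$ is exactly the paper's. (The paper additionally observes that for $\untila$ the inner ``$\forall j<k$'' check is decidable in deterministic polynomial time --- for $k>\abs{T}$ it degenerates to ``all reachable states satisfy $\varphi_1$'' --- so that single query is in fact in NP rather than $\text{NP}^{\text{NP}}$; this refinement is irrelevant to the $\text{P}^{\text{NP}^{\text{NP}}}$ statement but is what gives the tight bounds of Lemma~\ref{lem:CTL+Sync-lower-bounds}.)

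The lower bound, however, has a genuine gap, and it sits precisely at the point you flag as ``the main obstacle.'' A gadget in which ``paths of length $n_i$ enumerate all assignments of $\varphi_i$'' and reach $q_i$ according to whether the assignment is satisfying cannot be built in polynomial size: to make the leaf labels depend on the assignment you need the $2^{n_i}$ assignments to correspond to distinguishable paths, i.e.\ a full binary tree, and any polynomial-size DAG collapses that information. Your proposed fix --- flipping polarity so that satisfying assignments \emph{prevent} synchronization --- does not remove this obstruction, and the polarity intuition is in any case backwards: with the right gadget $F\forall q$ encodes \textsc{Sat} directly, because the existential quantifier over the synchronization \emph{depth} plays the role of the existential quantifier over assignments. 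The missing idea (the Stockmeyer--Meyer prime-cycle trick, \cite[Theorem 6.1]{SM73}) is to build, for each clause $c$ on variables $x_u,x_v,x_w$, one cycle of length $p_u p_v p_w$ (product of the corresponding primes), label depth $i$ of that cycle with $q$ iff the residues $(i \bmod p_u, i\bmod p_v, i\bmod p_w)$ form a satisfying assignment of $c$, and branch from $t_I$ into one cycle per clause; then $F\forall q$ holds iff some depth $z$ --- which by the Chinese remainder theorem ranges over all assignments --- satisfies every clause, i.e.\ iff $\psi$ is satisfiable, and the structure has polynomial size since $\sum_i p_i \in O(n^2\log n)$. A second, more easily repaired defect: you combine the answer bits with ``the boolean combination prescribed by $f$'' for an arbitrary polynomial-time $f$, but an arbitrary polynomial-time predicate on $n$ bits need not have a polynomial-size Boolean \emph{formula}, which is what CTL+Sync closure under $\lnot,\lor$ gives you. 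The paper instead invokes Wagner's criterion: it suffices to handle monotone query sequences and decide whether the number of yes-instances is odd, and parity does admit a polynomial-size formula $\psi_{\text{odd}}$ into which the $F\forall p_i$ are substituted.
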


\begin{proof}
{\bf Upper bound $\text{P}^{\text{NP}^{\text{NP}}}$}. 
The upper bound is obtained by a labelling algorithm, similar to a standard
algorithm for CTL model-checking, that computes the set $\sem{\varphi} = \{t \mid K,t \models \varphi\}$ 
of states that satisfy a formula $\varphi$ in a recursive manner, following the structure of the formula: 
the algorithm is executed first on the subformulas of $\varphi$, and then the
result is used to compute the states satisfying the formula $\varphi$.
For the following formulas, the labelling algorithm for CTL+Sync is the same as 
the algorithm for CTL, thus in polynomial time~\cite{Schnoebelen02}:
\begin{itemize}
\item for $\varphi = p$, we use  $\sem{\varphi} = \{t \mid p \in \Pi(t) \}$, 
\item for $\varphi = \lnot \varphi_1$, we have $\sem{\varphi}$ is the complement of $\sem{\varphi_1}$, 
\item for $\varphi = \varphi_1 \lor \varphi_2$, we have $\sem{\varphi}$ is the union of $\sem{\varphi_1}$ and $\sem{\varphi_2}$, and
\item for $\varphi = \enext \varphi_1$, we have $\sem{\varphi} = \{t \mid R(t) \cap \sem{\varphi_1} \neq \emptyset\}$,
\item for $\varphi = \anext \varphi_1$, we have $\sem{\varphi} = \{t \mid R(t) \subseteq \sem{\varphi_1}\}$,
\item for $\varphi = \varphi_1 \quntil \varphi_2$, we obtain $\sem{\varphi}$ by a reachability analysis. 
\end{itemize}

For $\varphi = \varphi_1 \untila \varphi_2$, we show that
deciding, given the sets $\sem{\varphi_1}$ and $\sem{\varphi_2}$ and a state $t$, 
whether the state $t$ belongs to $\sem{\varphi}$ is a problem in NP 
as follows: guess a position $n \leq 2^{\abs{T}}$ (represented in binary) 
and check that all paths of length~$n$ from $t$ end up in a state where $\varphi_2$ holds. 
This can be done in polynomial time by successive squaring of the transition
matrix of the Kripke structure~\cite[Theorem 6.1]{SM73}. 
Then, we need to check that $\varphi_1$ holds on all positions of all paths 
from~$t$ of length smaller than~$n$.
For $n > \abs{T}$, this condition is equivalent to ask that all reachable states 
satisfy $\varphi_1$ (because all reachable states are reachable in at most $\abs{T}$ steps), 
thus can be checked in polynomial time. 
For $n \leq \abs{T}$, we can compute in polynomial time all states reachable 
in at most $n$ steps and check that $\varphi_1$ holds in all such states. 

For $\varphi = \varphi_1 \untile  \varphi_2$, we present an algorithm in 
$\text{NP}^{\text{NP}}$ to decide if a given state $t$ belongs to $\sem{\varphi}$.
First guess a position $n \leq 2^{\abs{T}}$ (represented in binary) and 
then decide the following problem: given $n$, decide if 
for all $k < n$ there exists a path $t_0 t_1 \dots$ in $K$ such that 
$\varphi_1$ holds in $t_k$, and $\varphi_2$ holds in $t_n$. This problem
can be solved in coNP as follows: guess $k \leq n$ (represented in binary) 
and compute the set $s_k$ of states that can be reached from $t$ in exactly $k$ steps
(using matrix squaring as above). From the set $s_k \cap \sem{\varphi_1}$  
of states that satisfy $\varphi_1$ in $s_k$, compute the set of states reachable
in exactly $n-k$ steps, and check that none satisfies $\varphi_2$ to show that
the instance of the problem is negative. Thus we obtain an algorithm in
$\text{NP}^{\text{coNP}} = \text{NP}^{\text{NP}}$. 

It follows that the model-checking problem for CTL+Sync is in $\text{P}^{\text{NP}^{\text{NP}}}$.

{\bf Lower bound}.
The proof is based on the results of Lemma~\ref{lem:CTL+Sync-lower-bounds} 
that show NP-hardness of the model-checking problem for the formula $F \forall p$.
Given a Boolean propositional formulas $x_1$ in CNF, we can construct 
a Kripke structures $K_1$ such that $x_1$ is satisfiable if and only if the formula 
$F \forall p$ holds in $K_1$. 

By~\cite[Theorem~5.2]{Wagner87}, given $k$ instances $x_1, \dots, x_k$ of an NP-complete problem P (here 3SAT~\cite{Cook71}),
such that $x_i \in P$ implies $x_{i+1} \in P$ for all $1 \leq i < k$, in order to 
prove $\text{P}^{\text{NP}}_{\parallel}$-hardness\footnote{The result of~\cite[Theorem~5.2]{Wagner87} 
shows $\text{P}^{\text{NP}}_{\text{bf}}$-hardness, 
and $\text{P}^{\text{NP}}_{\text{bf}} = \text{P}^{\text{NP}}_{\parallel}$~\cite[Theorem~1]{BH91}.}  
of the model-checking problem 
it is sufficient to construct a CTL+Sync formula $\varphi$ and a Kripke structure $K$ such that 
$K \models \varphi$ if and only if $\abs{\{i \mid x_i \in P\}}$ is odd.
For each single instance $x_i$ we can construct a CTL+Sync formula $F \forall p_i$ 
and Kripke structure $K_i$ such that $K_i \models F \forall p_i$ if and only if $x_i \in P$ (i.e.,
the 3SAT formula $x_i$ is satisfiable). Since CTL+Sync is closed under Boolean
operations $\land$, $\lor$, and $\lnot$, it suffices to show that there exists
a polynomial-size Boolean formula $\psi_{\text{odd}}$ over variables $x_1, \dots, x_k$ that holds if and only if
an odd number of the variables $x_1, \dots, x_k$ are true, which is easy. 
Replacing in $\psi_{\text{odd}}$ the variables $x_i$ by $F \forall p_i$, and taking the union of 
the Kripke structures $K_1, \dots, K_k$ (merging their initial states) and labeling all 
states of $K_i$ by $\{p_1,\dots,p_{i-1},p_{i+1},\dots,p_k\}$, we obtain a Kripke structure where  
the CTL+Sync formula $\psi_{\text{odd}}$ holds if and only if the number of yes-instances
among $x_1, \dots, x_k$ is odd, thus showing $\text{P}^{\text{NP}}_{\parallel}$-hardness of 
the model-checking problem.
\end{proof}

The complexity lower bounds for the model-checking problem in Theorem~\ref{theo:CTL+Sync-complexity}
are based on Lemma~\ref{lem:CTL+Sync-lower-bounds} where we establish complexity bounds for fixed formulas.
We recall that the problems in DP can be solved by a polynomial-time algorithm 
that uses only two calls to an oracle for an NP-complete problem. 
A classical DP-complete problem is to decide, given two Boolean formulas $\psi_1$ 
and $\psi_2$, whether both $\psi_1$ is satisfiable and $\psi_2$ is valid.

\begin{lemma}\label{lem:CTL+Sync-lower-bounds}
Let $p,q \in \Pi$ be two atomic propositions. The model-checking problem is:
\begin{itemize}
\item NP-complete for the formulas $p \untila q$ and $F \forall q$, 
\item DP-hard for the formula $p \untile q$, and      
\item coNP-complete for the formula $G \exists q$.
\end{itemize}
\end{lemma}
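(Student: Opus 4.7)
The plan is to split the lemma into four pieces --- two NP-completeness claims, one coNP-completeness claim, and a DP-hardness claim --- all of which rest on a single NP-hardness construction for $F \forall q$. For the upper bounds, the NP algorithm is the subroutine already spelled out in the proof of Theorem~\ref{theo:CTL+Sync-complexity}: guess $k \le 2^{\abs{T}}$ in binary, and verify the depth-$k$ labelling in polynomial time by iterated squaring of the transition matrix (for $p \untila q$ the $p$-prefix is handled separately, reducing to a global reachability check once $k > \abs{T}$). The coNP membership of $G \exists q$ is immediate from the equivalence $G \exists q \equiv \lnot (F \forall \lnot q)$.

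For the core NP-hardness of $F \forall q$, I would reduce from 3SAT. Given a 3CNF $\phi$ on variables $x_1, \dots, x_n$, pick $n$ pairwise distinct primes $p_1, \dots, p_n$ of polynomial magnitude (the prime number theorem gives primes of size $O(n \log n)$). The Kripke structure $K_\phi$ has an initial state $t_0$ with outgoing edges to (i) the entry of a simple cycle of length $p_i$ for each variable $x_i$ and (ii) the entry of a simple cycle of length $p_{i_1} p_{i_2} p_{i_3}$ for each clause $C$ over variables $x_{i_1}, x_{i_2}, x_{i_3}$. At depth $k \ge 1$ from $t_0$ the set of reachable states consists of exactly one position per cycle, determined by $k$ modulo the cycle length. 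By the Chinese Remainder Theorem the position inside the clause cycle for $C$ is in bijection with the triple $(k \bmod p_{i_1}, k \bmod p_{i_2}, k \bmod p_{i_3})$, which, under the convention $\alpha_i(k) = 1$ iff $p_i \mid k$, encodes a three-variable assignment; every full assignment $\alpha \in \{0,1\}^n$ is realised for some $k \le p_1 \cdots p_n$ (take $k = \prod_{i : \alpha_i = 1} p_i$). Labelling every variable-cycle state with $q$ and labelling each clause-cycle position with $q$ exactly when the induced three-variable assignment satisfies the clause gives: the depth-$k$ reachable set is fully $q$-labelled iff every clause of $\phi$ is satisfied by $\alpha(k)$, hence $K_\phi \models F \forall q$ iff $\phi$ is satisfiable (assuming $t_0$ is not $q$-labelled, so that $k=0$ does not vacuously witness the formula).

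The remaining three cases are local perturbations of this gadget. \textbf{NP-hardness of $p \untila q$} is obtained by further labelling every state of $K_\phi$ with $p$, which trivialises the $p$-prefix and collapses $p \untila q$ to $F \forall q$. \textbf{coNP-hardness of $G \exists q$} follows from $G \exists q \equiv \lnot (F \forall \lnot q)$: apply the $F \forall q$ reduction and complement the $q$-labels. For \textbf{DP-hardness of $p \untile q$}, reduce from SAT-UNSAT: given $(\phi_1,\phi_2)$, take the synchronous product of $K_{\phi_1}$ (with $q$-labels as above) and a companion $K'_{\phi_2}$ built on a disjoint set of primes, in which clause-cycle positions are labelled $p$ exactly when the induced assignment \emph{falsifies} the clause, so that $p$ is reachable at depth $j$ in $K'_{\phi_2}$ iff $\alpha_2(j) \not\models \phi_2$. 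Since paths in a product decouple into independent pairs, $p \untile q$ at the product's initial state is equivalent to the existence of a $k$ such that $q$ is reachable at depth $k$ in $K_{\phi_1}$ (encoding SAT($\phi_1$)) and $p$ is reachable at every depth $j < k$ in $K'_{\phi_2}$; choosing $k$ beyond the period of $K'_{\phi_2}$, and shifting a SAT-witness for $\phi_1$ by a multiple of the period of $K_{\phi_1}$ to preserve it, makes the second conjunct quantify over every assignment of $\phi_2$ and thus capture UNSAT($\phi_2$).

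The main obstacle is the clause-cycle labelling in the core reduction: a clause depends jointly on three residues while the reachable position of its cycle at depth $k$ is a single index modulo $p_{i_1} p_{i_2} p_{i_3}$, so CRT is what justifies both that every three-variable assignment is realised by some depth and that depths yielding the same residue triple behave identically under the labelling. Once this gadget is in place, the other three hardness statements follow by the labelling changes and the product construction described above.
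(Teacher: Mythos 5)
Your treatment of the first and third bullet items --- the prime-cycle reduction from 3SAT for $F \forall q$, the everywhere-$p$ labelling to get $p \untila q$, the dualization for $G \exists q$, and the deferral of the upper bounds to the algorithm inside Theorem~\ref{theo:CTL+Sync-complexity} --- is essentially the paper's own proof. Your residue convention ($x_i$ true iff $p_i$ divides $k$) differs harmlessly from the paper's (which treats only residues in $\{0,1\}$ as defining assignments), and your fully $q$-labelled variable cycles are redundant but do no damage.

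The DP-hardness argument for $p \untile q$, however, has a genuine gap. By the semantics, $K,t \models p \untile q$ asks for a $k$ such that for every $j<k$ \emph{some} path carries $p$ at position $j$ and $q$ at position $k$; the $q$-condition is therefore existential over paths. After your product decouples the two components, the SAT half of your reduction degenerates to ``there exists $k$ such that $q$ is reachable at depth $k$ in $K_{\phi_1}$.'' With the disjoint-clause-cycle gadget this holds as soon as \emph{one} clause cycle has a $q$-labelled state at depth $k$ --- true for essentially every $k$ whenever $\phi_1$ has at least one satisfiable clause --- whereas SAT$(\phi_1)$ corresponds to \emph{all} cycles being $q$-labelled at the same depth, i.e. to the universal path quantifier of $F \forall q$, which $\untile$ does not supply. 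So your product instance tests (roughly) only UNSAT$(\phi_2)$, and the reduction from SAT--UNSAT fails.

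The deeper point is that decoupling is exactly the wrong move for extracting NP-hardness from $\untile$: the only universal quantifier available is the ``for all $j<k$'' over prefix positions, and the paper's NP-hardness proof for $p \untile q$ exploits the \emph{coupling} between the two conditions. It prepends $m$ disjoint simple paths of length $m+1$ to the $m$ clause cycles and places $p$ at position $i$ of the $i$th path only, so that witnessing $p$ at position $j$ forces the $j$th path (hence the $j$th cycle) to be the witness, and the requirement that this same witness also carry $q$ at position $k$ then forces every cycle to be $q$-labelled at depth $k$, recovering the universal quantification needed for SAT. Your coNP half (labelling falsifying positions with $p$ so that ``$p$ reachable at every depth below $k$,'' for $k$ past the full period, encodes unsatisfiability) is sound and close in spirit to the paper's validity reduction, but without a working NP-hardness mechanism for $p \untile q$ the combination does not yield DP-hardness.
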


\begin{proof}
We prove the hardness results (complexity lower bounds), since the complexity 
upper bounds follow from the proof of Theorem~\ref{theo:CTL+Sync-complexity}. 


The proof technique is analogous to the NP-hardness proof of~\cite[Theorem 6.1]{SM73},
and based on the following. Given a Boolean propositional formula $\psi$ over variables
$x_1, \dots, x_n$, consider the first $n$ prime numbers $p_1, \dots, p_n$. 
For a number $z \in \nat$, if $z\!\mod p_i \in \{0,1\}$ for all $1 \leq i \leq n$,
then the binary vector $(z\!\mod p_1, \dots, z\!\mod p_n)$ defines an assignment 
to the variables of the formula. Note that conversely, every such binary vector 
can be defined by some number $z \in \nat$ (by the Chinese remainder theorem).

{\bf NP-hardness of $F \forall q$ (and thus of $p \untila q$)}. 
The proof is by a reduction from the Boolean satisfiability problem 3SAT which is NP-complete~\cite{Cook71}.
Given a Boolean propositional formula $\psi$ in CNF, with set $C$ of (disjunctive) clauses
over variables $x_1, \dots, x_n$ (where each clause contains three 
variables), we construct a Kripke structure $K_{\psi}$ as follows:
for each clause $c \in C$, we construct a cycle $t_0, t_1, \dots, t_{r-1}$ of length
$r = p_{u} \cdot p_v \cdot p_w$ where the three variables in the clause are
$x_u$, $x_v$, and $x_w$. We call $t_0$ the origin of the cycle,
and we assign to every state $t_i$ the label~$q$ if the number $i$ 
defines an assignment that satisfies the clause $c$.
The Kripke structure $K_{\psi}$ is the disjoint union of the cycles corresponding to 
each clause, and an initial state $t_I$ with transitions from $t_I$ to the 
origin of each cycle. Note that the Kripke structure $K_{\psi}$ can be constructed in 
polynomial time, as the sum of the first $n$ prime numbers is bounded by a polynomial
in $n$: $\sum_{i=1}^{n} p_i \in O(n^2 \log n)$~\cite{BS96}.

It follows that a number $z$ defines an assignment that satisfies the formula~$\psi$ 
(i.e., satisfies all clauses of $\psi$) if and only if every path of 
length $z+1$ from $t_I$ reaches a state labelled by $q$. 
Therefore the formula $\psi$ is satisfiable if and only if $K_{\psi},t_I \models F \forall q$,
and it follows that the model-checking problem is NP-hard for the formulas 
$F \forall q$ and for $p \untila q$ (let $p$ hold in every state of $K_{\psi}$).

\begin{figure}[!tb]
  \begin{center}
    	\hrule

\begin{gpicture}(120,76)(0,0)

\gasset{Nw=1,Nh=1,Nmr=.5, ExtNL=y, NLangle=180, NLdist=1, rdist=1, loopdiam=6}   



\rpnode[Nframe=n, Nfill=y, fillgray=.95](NodeName)(15,27)(12,8){}
\rpnode[Nframe=n, Nfill=y, fillgray=.95](NodeName)(35,27)(12,8){}
\rpnode[Nframe=n, Nfill=y, fillgray=.95](NodeName)(55,27)(12,8){}
\node[Nframe=n, iangle=90, ExtNL=n, NLangle=0, NLdist=0](NodeName)(70,27){$\dots$}
\rpnode[Nframe=n, Nfill=y, fillgray=.95](NodeName)(85,27)(12,8){}

\node[Nframe=n, iangle=90, ExtNL=n, NLangle=0, NLdist=0](u10)(15,27){$c_1$}
\node[Nframe=n, iangle=90, ExtNL=n, NLangle=0, NLdist=0](u10)(35,27){$c_2$}
\node[Nframe=n, iangle=90, ExtNL=n, NLangle=0, NLdist=0](u10)(55,27){$c_3$}
\node[Nframe=n, iangle=90, ExtNL=n, NLangle=0, NLdist=0](u10)(85,27){$c_m$}

\put(100,51){\makebox(0,0)[l]{\small \begin{tabular}{l}$p$ holds \\ at $i$th state \\ of $i$th path.\end{tabular}}}
\put(100,27){\makebox(0,0)[l]{\small \begin{tabular}{l}$p$ holds \\ everywhere \\ in the cycles.\end{tabular}}}

\drawline[AHnb=0, linegray=.8, dash={1.5}0](5,67)(110,67)
\drawline[AHnb=0, linegray=.8, dash={1.5}0](5,36)(110,36)
\drawline[AHnb=0, linegray=.8, dash={1.5}0](5,18)(110,18)

\node[Nmarks=i, iangle=90, NLangle=145, Nfill=n](s0)(55,69){$p$}
\node[Nmarks=i, iangle=90, NLangle=40, Nfill=n](s0)(55,69){$t_I$}

\node[Nmarks=n, iangle=90, Nfill=n](s1)(15,64){$p$}
\node[Nmarks=n, iangle=90](s2)(35,64){}
\node[Nmarks=n, iangle=90](s3)(55,64){}
\node[Nframe=n, iangle=90, ExtNL=n, NLangle=0, NLdist=0](s4)(68,64){$\dots$}
\node[Nmarks=n, iangle=90](s4)(85,64){}

\drawedge[ELpos=50, ELside=r, curvedepth=0](s0,s1){}
\drawedge[ELpos=50, ELside=r, curvedepth=0](s0,s2){}
\drawedge[ELpos=50, ELside=r, curvedepth=0](s0,s3){}
\drawedge[ELpos=50, ELside=r, curvedepth=0](s0,s4){}

\node[Nmarks=n, iangle=90](t1)(15,59){}
\node[Nmarks=n, iangle=90, Nfill=n](t2)(35,59){$p$}
\node[Nmarks=n, iangle=90](t3)(55,59){}
\node[Nmarks=n, iangle=90](t4)(85,59){}

\drawedge[ELpos=50, ELside=r, curvedepth=0](s1,t1){}
\drawedge[ELpos=50, ELside=r, curvedepth=0](s2,t2){}
\drawedge[ELpos=50, ELside=r, curvedepth=0](s3,t3){}
\drawedge[ELpos=50, ELside=r, curvedepth=0](s4,t4){}

\node[Nmarks=n, iangle=90](s1)(15,54){}
\node[Nmarks=n, iangle=90](s2)(35,54){}
\node[Nmarks=n, iangle=90, Nfill=n](s3)(55,54){$p$}
\node[Nmarks=n, iangle=90](s4)(85,54){}

\drawedge[ELpos=50, ELside=r, curvedepth=0](t1,s1){}
\drawedge[ELpos=50, ELside=r, curvedepth=0](t2,s2){}
\drawedge[ELpos=50, ELside=r, curvedepth=0](t3,s3){}
\drawedge[ELpos=50, ELside=r, curvedepth=0](t4,s4){}

\node[Nframe=n, iangle=90](t1)(15,49){}
\node[Nframe=n, iangle=90](t2)(35,49){}
\node[Nframe=n, iangle=90](t3)(55,49){}
\node[Nframe=n, iangle=90](t4)(85,49){}

\drawedge[ELpos=50, ELside=r, curvedepth=0](s1,t1){}
\drawedge[ELpos=50, ELside=r, curvedepth=0](s2,t2){}
\drawedge[ELpos=50, ELside=r, curvedepth=0](s3,t3){}
\drawedge[ELpos=50, ELside=r, curvedepth=0](s4,t4){}

\node[Nframe=n, iangle=90, ExtNL=n, NLangle=0, NLdist=0](t1)(15,49){$\vdots$}
\node[Nframe=n, iangle=90, ExtNL=n, NLdist=0](t2)(35,49){$\vdots$}
\node[Nframe=n, iangle=90, ExtNL=n, NLdist=0](t3)(55,49){$\vdots$}
\node[Nframe=n, iangle=90, ExtNL=n, NLdist=0](t4)(85,49){$\vdots$}

\node[Nframe=n, iangle=90](s1)(15,45){}
\node[Nframe=n, iangle=90](s2)(35,45){}
\node[Nframe=n, iangle=90](s3)(55,45){}
\node[Nframe=n, iangle=90](s4)(85,45){}

\node[Nmarks=n, iangle=90](t1)(15,40){}
\node[Nmarks=n, iangle=90](t2)(35,40){}
\node[Nmarks=n, iangle=90](t3)(55,40){}
\node[Nframe=n, iangle=90, ExtNL=n, NLangle=0, NLdist=0](t4)(69,40){$\dots$}
\node[Nmarks=n, iangle=90, Nfill=n](t4)(85,40){$p$}

\drawedge[ELpos=50, ELside=r, curvedepth=0](s1,t1){}
\drawedge[ELpos=50, ELside=r, curvedepth=0](s2,t2){}
\drawedge[ELpos=50, ELside=r, curvedepth=0](s3,t3){}
\drawedge[ELpos=50, ELside=r, curvedepth=0](s4,t4){}

\node[Nmarks=n, iangle=90, Nfill=y, NLangle=270](s1)(15,35){$q$}
\node[Nmarks=n, iangle=90](s2)(35,35){}
\node[Nmarks=n, iangle=90, Nfill=y, NLangle=270](s3)(55,35){$q$}
\node[Nmarks=n, iangle=90, Nfill=y, NLangle=270](s4)(85,35){$q$}

\drawedge[ELpos=50, ELside=r, curvedepth=0](t1,s1){}
\drawedge[ELpos=50, ELside=r, curvedepth=0](t2,s2){}
\drawedge[ELpos=50, ELside=r, curvedepth=0](t3,s3){}
\drawedge[ELpos=50, ELside=r, curvedepth=0](t4,s4){}

\node[Nmarks=n, iangle=90, Nfill=y, NLangle=290](u1)(11,34){$q$}
\node[Nmarks=n, iangle=90](u2)(8,31){}
\node[Nmarks=n, iangle=90](u3)(7,27){}
\node[Nframe=n, iangle=90](u4)(8,23){}
\drawarc[dash={.2 1}0](15,27,8,220,260)

\node[Nframe=n, iangle=90](u6)(15,19){}
\node[Nmarks=n, iangle=90, Nfill=y, NLangle=120](u7)(19,20){$q$}
\node[Nmarks=n, iangle=90](u8)(22,23){}
\node[Nmarks=n, iangle=90](u9)(23,27){}
\node[Nmarks=n, iangle=90](u10)(22,31){}
\node[Nmarks=n, iangle=90](u11)(19,34){}

\drawedge[ELpos=50, ELside=r, curvedepth=0](s1,u1){}
\drawedge[ELpos=50, ELside=r, curvedepth=0](u1,u2){}
\drawedge[ELpos=50, ELside=r, curvedepth=0](u2,u3){}
\drawedge[ELpos=50, ELside=r, curvedepth=0](u3,u4){}
\drawedge[ELpos=50, ELside=r, curvedepth=0](u6,u7){}
\drawedge[ELpos=50, ELside=r, curvedepth=0](u7,u8){}
\drawedge[ELpos=50, ELside=r, curvedepth=0](u8,u9){}
\drawedge[ELpos=50, ELside=r, curvedepth=0](u9,u10){}
\drawedge[ELpos=50, ELside=r, curvedepth=0](u10,u11){}
\drawedge[ELpos=50, ELside=r, curvedepth=0](u11,s1){}

\node[Nmarks=n, iangle=90, Nfill=y, NLangle=290](u1)(31,34){$q$}
\node[Nmarks=n, iangle=90](u2)(28,31){}
\node[Nframe=n, iangle=90](u3)(27,27){}

\drawarc[dash={.2 1}0](35,27,8,190,18)

\node[Nframe=n, iangle=90](u10)(42,31){}
\node[Nmarks=n, iangle=90](u11)(39,34){}

\drawedge[ELpos=50, ELside=r, curvedepth=0](s2,u1){}
\drawedge[ELpos=50, ELside=r, curvedepth=0](u1,u2){}
\drawedge[ELpos=50, ELside=r, curvedepth=0](u2,u3){}

\drawedge[ELpos=50, ELside=r, curvedepth=0](u10,u11){}
\drawedge[ELpos=50, ELside=r, curvedepth=0](u11,s2){}

\node[Nmarks=n, iangle=90](u1)(51,34){}
\node[Nmarks=n, iangle=90, Nfill=y, NLangle=330](u2)(48,31){$q$}
\node[Nframe=n, iangle=90](u3)(47,27){}

\drawarc[dash={.2 1}0](55,27,8,190,18)

\node[Nframe=n, iangle=90](u10)(62,31){}
\node[Nmarks=n, iangle=90](u11)(59,34){}

\drawedge[ELpos=50, ELside=r, curvedepth=0](s3,u1){}
\drawedge[ELpos=50, ELside=r, curvedepth=0](u1,u2){}
\drawedge[ELpos=50, ELside=r, curvedepth=0](u2,u3){}

\drawedge[ELpos=50, ELside=r, curvedepth=0](u10,u11){}
\drawedge[ELpos=50, ELside=r, curvedepth=0](u11,s3){}

\node[Nmarks=n, iangle=90, Nfill=y, NLangle=290](u1)(81,34){$q$}
\node[Nmarks=n, iangle=90](u2)(78,31){}
\node[Nframe=n, iangle=90](u3)(77,27){}

\drawarc[dash={.2 1}0](85,28,8,190,18)

\node[Nframe=n, iangle=90](u10)(92,31){}
\node[Nmarks=n, iangle=90](u11)(89,34){}

\drawedge[ELpos=50, ELside=r, curvedepth=0](s4,u1){}
\drawedge[ELpos=50, ELside=r, curvedepth=0](u1,u2){}
\drawedge[ELpos=50, ELside=r, curvedepth=0](u2,u3){}

\drawedge[ELpos=50, ELside=r, curvedepth=0](u10,u11){}
\drawedge[ELpos=50, ELside=r, curvedepth=0](u11,s4){}

\put(7,14){\makebox(0,0)[l]{$\psi = c_1 \land c_2 \land \dots \land c_m$}} 
\put(7,9){\makebox(0,0)[l]{$c_1 = x_1 \lor x_2 \lor \lnot x_3: \text{cycle of length } r = p_1 \cdot p_2 \cdot p_3 = 2 \cdot 3 \cdot 5 = 30$}} 
\put(7,4){\makebox(0,0)[l]{$\text{satisfying assigments for } c_1: \text{0 ({\tt 000}), 1 ({\tt 111}), 10 ({\tt 010}), \dots, 25 ({\tt 110}).}$}}





\end{gpicture}

	\hrule
	\smallskip
	\caption{Reduction to show NP-hardness of $p \untile q$ in Lemma~\ref{lem:CTL+Sync-lower-bounds}.\label{fig:redcution}}
  \end{center}
\end{figure}

{\bf NP-hardness of $p \untile q$}. 
The proof is by a reduction from 3SAT~\cite{Cook71}.
The reduction is illustrated in \figurename~\ref{fig:redcution}.
Given a Boolean propositional formula $\psi$ in CNF, with set $C$ of (disjunctive) 
clauses over variables $x_1, \dots, x_n$ (where each clause contains three 
variables), we construct a Kripke structure $K$ as follows: let $m = \abs{C}$
be the number of clauses in $\psi$, and construct $m$ disjoint simple paths $\pi_i$ 
from $t_I$ of length $m+1$ (of the form $t_I, t_1, \dots, t_m$), where the last state 
of each path $\pi_i$ has a transition to the origin of a cycle corresponding to the $i$th clause (the cycles 
and their labeling are as defined in the NP-hardness proof of $F \forall q$). 
The state $t_I$ and all states of the cycles are also labelled by~$p$, 
and in the $i$th path from $t_I$, the $i$th state after $t_I$ is labelled by~$p$.
The construction can be obtained in polynomial time.

We show that $\psi$ is satisfiable if and only if the formula $p \untile q$  
holds from $t_I$ in $K$. Recall that $p \untile q$ holds
if there exists $k \geq 0$ such that for all $0 \leq j < k$, there exists
a path $t_0 t_1 \dots$ in $K$ with $t_0 = t_I$ and $K,t_j \models p$ and 
$K,t_k \models q$.

For the first direction of the proof, if $\psi$ is satisfiable,
then let $z \in \nat$ define a satisfying assignment, and let $k = m+2+z$. 
Then all paths of length $k$ from $t_I$ in $K$ end up in a state labelled by~$q$. 
Now we consider an arbitrary $j < k$ and show that there exists a path of length $k$
from $t_I$ that ends up in a state labelled by~$q$, and with the $j$th state labelled
by $p$. For $j = 0$ and for $j > m$, the conditions are satisfied by all paths,
and for $j \leq m$, the conditions are satisfied by the $j$th path from $t_I$.

For the second direction of the proof, let $k$ be a position such that for all $0 \leq j < k$, 
there exists a path $t_0 t_1 \dots$ in $K$ with $t_0 = t_I$ and 
$K,t_j \models p$ and $K,t_k \models q$. Then $k \geq m+2$ since only the states
in the cycles are labelled by~$q$. Consider the set $P$ containing, for each $j = 1,2,\dots, m$,
a path $t_I t_1 \dots$ in $K$ with $K,t_j \models p$ and $K,t_k \models q$. 
It is easy to see by the construction of $K$ that $P$ contains all the paths 
of length $k$ in $K$. Therefore, all paths of length $z = k - (m+2)$ from the
origin of each cycle end up in a state labelled by~$q$. It follows that $z$
defines an assignment that satisfies all clauses in $\psi$, thus $\psi$
is satisfiable.

{\bf DP-hardness of $p \untile q$}. 
First we present a coNP-hardness proof for $p \untile q$ that uses a reduction of the same flavor as 
in the NP-hardness of $F \forall q$,
by a reduction from the Boolean validity problem (dual of 3SAT) which is coNP-complete~\cite{Cook71}.
Given a Boolean propositional formula $\psi$ in DNF, with set $C$ of (conjunctive)
clauses over variables $x_1, \dots, x_n$ (where each clause contains 
three variables), we construct a Kripke structure $K$ with the same structure
as in the NP-hardness of $F \forall q$, 
only the labeling is different, and defined
as follows: in each cycle $t_0, t_1, \dots, t_{r-1}$, we label by $q$ the last state 
$t_{r-1}$, and we label by $p$ all states $t_i$ such that either $(a)$ the number $i$ 
defines an assignment that satisfies the clause corresponding to the cycle,
or $(b)$ the number $i$ does not define a binary assignment.
Finally, we label by $p$ the initial state~$t_I$.

It follows from this construction that the following are equivalent:
\begin{itemize}
\item there exists an assignment that falsifies the formula $\psi$ (i.e., falsifies
all clauses in $\psi$);
\item there is a length $z$ such that every path of length $z$ from $t_I$ in $K$
ends up in a state that is not labeled by $p$ (in fact the number $z-1$ 
corresponds to a binary assignment by $(b)$, that falsifies all clauses by $(a)$),
that is the formula $p \untile q$ does not hold from $t_I$.
\end{itemize}

We show that $\psi$ is valid (is satisfied by all assignments) if and only if
$p \untile q$ holds from $t_I$ in $K$.

For the first direction of the proof, if $\psi$ is valid,
then let $k = p_1 \cdot p_2 \cdots p_n$ and note that every path
of length $k$ from $t_I$ ends up in a state labeled by $q$. Now for every
$0 < j < k$, either $j-1$ defines a binary assignment and since $\psi$ is valid, 
the assignment satisfies some clause in $\psi$, or $j-1$ does not define a binary 
assignment. In both cases (and also for $j=0$), there exists a path of length $j$ 
from $t_I$ that ends up in a state labeled by $p$, that can be prolonged to 
a path of length $k$, thus ending up in a state labeled by $q$. This shows that
$p \untile q$ holds.

For the second direction of the proof, let $k$ be such that for all $0 \leq j < k$, 
there exists a path $t_0 t_1 \dots$ in $K$ with $t_0 = t_I$ and 
$K,t_j \models p$ and $K,t_k \models q$. Consider the set of all such paths
corresponding to $j = 0,1,\dots, k-1$, and the set of cycles visited by these
paths (each path starts in $t_I$ and visits one cycle). We claim that 
the subformula of $\psi$ consisting of the clauses corresponding to those cycles
is valid, which entails the validity of $\psi$. To show this, we can assume
without loss of generality that every cycle is visited for some $j$ (we can ignore
the cycles that are not visited, and remove from $\psi$ the corresponding clauses).
It follows that $k \geq p_1 \cdot p_2 \cdots p_n$, and every assignment of the 
variables in $\psi$ is represented by some number $j < k$. For all such~$j$,
there is a path of length $j-1$ that ends up in a state labeled by $p$, hence
the corresponding assignment satisfies a clause in $\psi$, thus satisfies $\psi$.
It follows that $\psi$ is satisfied by all assignments, and $\psi$ is valid. 

The DP-hardness follows by carefully  combining the NP-hardness and coNP-hardness proofs
of $p \untile q$. 

{\bf coNP-hardness of $G \exists q$}. The result follows from
the NP-hardness of $F \forall q$ since $G \exists q$ is equivalent to $\lnot F \forall \lnot q$.
\end{proof}

The complexity result of Theorem~\ref{theo:CTL+Sync-complexity} is not tight, 
with a $\text{P}^{\text{NP}^{\text{NP}}}$ upper bound and a $\text{P}^{\text{NP}}_{\parallel}$-hard lower bound.
Even for the fixed formula $p \untile q$, the gap between our $\text{NP}^{\text{NP}}$
upper bound and the DP-hardness result provides an interesting open question for future work. 

\section{Extension of CTL+Sync with Always and Eventually} \label{sec:extension}

We consider an extension of CTL+Sync with formulas of the form $\T Q \varphi$
where $\T \in \{F,G \}^+$ is a sequence of unary temporal operators Eventually (F) and Always (G).
For example, the formula $FG \forall p$ expresses strong synchronization, 
namely that from some point on, all positions on every path satisfy $p$;
the formula $GF \forall p$ expresses weak synchronization, namely that
there are infinitely many positions such that, on every path at those positions
$p$ holds. In fact only the combination of operators $FG$ and $GF$ need to be 
considered, as the other combinations of operators reduce to either 
$FG$ or $GF$ using the LTL identities $FGF \varphi \equiv GF \varphi$ 
and $GFG \varphi \equiv FG \varphi$. 

Formally, define:

\begin{itemize}
\item $K,t \models GF \forall \varphi_1$ if for all $k \geq 0$, there exists $j \geq k$ 
such that for all paths $t_0 t_1 \dots$ in $K$ with $t_0 = t$, we have $K,t_j \models \varphi_1$.

\item $K,t \models GF \exists \varphi_1$ if for all $k \geq 0$, there exists $j \geq k$ and
there exists a path $t_0 t_1 \dots$ in $K$ with $t_0 = t$ such that $K,t_j \models \varphi_1$.



\item $K,t \models FG \forall \varphi_1$ if $K,t \not\models GF \exists \lnot \varphi_1$.

\item $K,t \models FG \exists \varphi_1$ if $K,t \not\models GF \forall \lnot \varphi_1$.
\end{itemize}

The model-checking problem for the formula $GF \forall \varphi_1$ is NP-complete:
guess positions $n,k \leq 2^{\abs{T}}$ (represented in binary) and 
check in polynomial time that the states reachable by all paths of length $n$ satisfy $\varphi_1$,
and that set of the states reachable after $n+k$ steps is the same as the set
of states reachable after $n$ steps, where $k > 0$. 
This corresponds to finding a lasso in the subset construction for the Kripke 
structure~$K$. A matching NP lower bound follows from the reduction in the
NP-hardness proof of $F \forall q$ (Lemma~\ref{lem:CTL+Sync-lower-bounds}).

The model-checking problem for the formula $GF \exists \varphi_1$ can be solved
in polynomial time, as this formula is equivalent to saying that there exists
a state labeled by $\varphi_1$ that is reachable from a reachable non-trivial 
strongly connected component (SCC) --- an SCC is trivial if it consists of a single state without self-loop. 
To prove this, note that if a state $t^*$ labeled by $\varphi_1$ is reachable 
from a reachable non-trivial SCC, then $t^*$ can be reached by an arbitrarily 
long path, thus the formula $GF \exists \varphi_1$ holds. For the other direction,
if no state labeled by~$\varphi_1$ is reachable from a reachable non-trivial SCC,
then every path to a state labeled by $\varphi_1$ is acyclic (otherwise, the path 
would contain a cycle, belonging to an SCC). Since acyclic paths have length at
most $\abs{T}$, it follows that the formula $GF \exists \varphi_1$ does not
hold, which concludes the proof.

From the above arguments, it follows that the complexity status of the 
model-checking problem for this extension of CTL+Sync is the same as the 
complexity of CTL+Sync model-checking in Theorem~\ref{theo:CTL+Sync-complexity}.

\begin{theorem}\label{theo:CTL+Sync-extended-complexity}
The model-checking problem for CTL+Sync extended with sequences of unary 
temporal operators lies in $\text{P}^{\text{NP}^{\text{NP}}}$ and is $\text{P}^{\text{NP}}_{\parallel}$-hard.
\end{theorem}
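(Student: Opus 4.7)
The plan is to extend the labelling algorithm from the proof of Theorem~\ref{theo:CTL+Sync-complexity} so that it also handles subformulas of the form $GF Q \varphi_1$ and $FG Q \varphi_1$ (with $Q \in \{\forall,\exists\}$), while keeping the whole procedure inside $\text{P}^{\text{NP}^{\text{NP}}}$; the lower bound is free since the new logic contains CTL+Sync.

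For the upper bound, I would process subformulas bottom-up. When $\sem{\varphi_1}$ has already been computed (and is therefore available as a set of states), the only new cases are the four combinations $GF \forall \varphi_1$, $GF \exists \varphi_1$, $FG \forall \varphi_1$, and $FG \exists \varphi_1$. By the dualities $FG \forall \varphi_1 \equiv \lnot GF \exists \lnot \varphi_1$ and $FG \exists \varphi_1 \equiv \lnot GF \forall \lnot \varphi_1$, it is enough to decide the two $GF$ cases and then complement in polynomial time (noting that $\sem{\lnot \varphi_1}$ is just the complement of $\sem{\varphi_1}$). The $GF \exists \varphi_1$ case is handled in polynomial time by the SCC characterisation already spelled out in the excerpt: compute the SCC decomposition of $K$, identify the reachable non-trivial SCCs, and check whether any state of $\sem{\varphi_1}$ is reachable from one of them. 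The $GF \forall \varphi_1$ case is handled by an NP query: guess $n, k \leq 2^{|T|}$ in binary, compute by iterated squaring of the transition matrix the sets $s_n$ and $s_{n+k}$ of states reachable from $t$ in exactly $n$ and $n+k$ steps, and verify that $s_n = s_{n+k} \subseteq \sem{\varphi_1}$. This detects a lasso in the subset construction whose loop lies entirely inside $\sem{\varphi_1}$, which is precisely the condition for $GF \forall \varphi_1$ to hold at~$t$.

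Putting this into the dynamic-programming scheme of Theorem~\ref{theo:CTL+Sync-complexity}, every new subformula contributes either a polynomial-time computation or an NP query (already dominated by the $\text{NP}^{\text{NP}}$ oracle used for the $\untile$ operator). The overall recursive labelling procedure therefore remains a polynomial-time algorithm with an $\text{NP}^{\text{NP}}$ oracle, which yields the $\text{P}^{\text{NP}^{\text{NP}}}$ upper bound.

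For the lower bound, since CTL+Sync is syntactically contained in the extended logic and has the same semantics on formulas without $FG$/$GF$ prefixes, the $\text{P}^{\text{NP}}_{\parallel}$-hardness established in Theorem~\ref{theo:CTL+Sync-complexity} transfers directly. I do not expect any real obstacle here; the only care needed is to verify that the correctness argument for $GF \forall \varphi_1$ via matrix squaring tolerates the doubly-exponential bound on $n$ and $k$, which is a standard use of the $O(|T|^2)$-sized subset lattice combined with the pigeonhole principle on lasso lengths.
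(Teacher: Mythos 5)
Your proposal matches the paper's argument essentially step for step: bottom-up labelling on subformulas, a polynomial-time SCC test for $GF \exists \varphi_1$, an NP guess of $n,k \leq 2^{\abs{T}}$ with iterated matrix squaring for $GF \forall \varphi_1$, the $FG$ cases by dualization, and the lower bound inherited from Theorem~\ref{theo:CTL+Sync-complexity}. One small correction: the condition you actually verify, $s_n = s_{n+k} \subseteq \sem{\varphi_1}$, is the right one, but your gloss that the \emph{entire} loop of the lasso lies inside $\sem{\varphi_1}$ is neither what that check enforces nor what $GF \forall \varphi_1$ requires --- that stronger condition would be sound but incomplete (e.g.\ when the synchronization points recur only at every other step), and the bound on $n,k$ is singly, not doubly, exponential.
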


\section{Expressive Power}\label{sec:expressivity}

The expressive power of CTL+Sync (even extended with Always and Eventually) 
is incomparable with the expressive power of MSO. 
By the remark at the end of Section~\ref{sec:ss}, CTL+Sync can express non-regular
properties, and thus is not subsumed by MSO, and standard argument based on counting properties~\cite{Wolper83}
showing that CTL is less expressive than MSO apply straightforwardly to show that 
formulas of MSO are not expressible in CTL+Sync~\cite{CGP01}.

We show that the formulas $GF \forall p$ and $FG \forall p$ for weak and strong synchronization
cannot be expressed in the logic CTL+Sync, thus CTL+Sync extended with sequences of unary 
temporal operators is strictly more expressive than CTL+Sync. 
The result holds if the Next operator is not allowed, and also if the 
Next operator is allowed.

\begin{figure}[!tb]%
	\begin{center}
    	\hrule
		\input{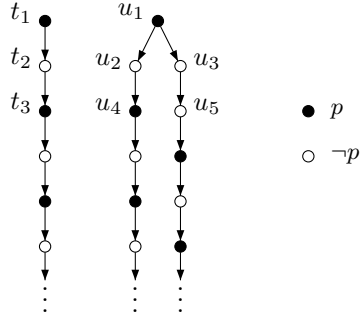}%
	\hrule
	\smallskip
	\caption{States $t_1$ and $u_1$ are indistinguishable by formulas of CTL+Sync.\label{fig:FG}}%
	\end{center}
\end{figure}

\begin{theorem}\label{theo:CTL+Sync-extended-expressive-power-2}
The logic CTL+Sync (even without the Next operator) extended with sequences of unary 
temporal operators is strictly more expressive than CTL+Sync (even using the Next operator).
\end{theorem}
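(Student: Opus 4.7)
The goal is to show that no CTL+Sync formula $\varphi$, even one using the Next operator, is equivalent to $GF\forall p$ (the statement for $FG\forall p$ is symmetric via negation, since $FG\forall p \equiv \lnot GF\exists \lnot p$ and CTL+Sync is closed under negation). The plan is to exhibit, for each $n \in \nat$, a pair of pointed Kripke structures $(K_n, t_n)$ and $(K'_n, u_n)$ such that $K_n, t_n \models GF\forall p$ and $K'_n, u_n \not\models GF\forall p$, but $t_n$ and $u_n$ agree on every CTL+Sync formula of operator-nesting depth at most $n$. Since any fixed CTL+Sync formula has finite nesting depth, this implies that $GF\forall p$ is not expressible in CTL+Sync.

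The construction adapts the template of Figure~\ref{fig:FG}: on one side we place the root of a single infinite alternating path (so $GF\forall p$ holds), and on the other side the root of the branching structure whose two subtrees are alternating paths with $p$-labels out of phase (so $GF\forall p$ fails at every depth $\geq 1$). The difficulty is that in the bare template, $t_1$ and $u_1$ are already distinguishable by $\anext \anext p$, a formula of nesting depth two. To defeat the Next operator, I would modify both structures so that they become $n$-bisimilar, by prepending an identical padding prefix of depth exceeding $n$ and by ``bundling'' each state of the single-path side with enough bisimilar duplicates to match the branching degree of the branching side at every depth up to $n$. The desynchronizing branching is thus pushed to depths larger than $n$, where $\qnext$-formulas of depth at most $n$ cannot reach it.

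The key step is to prove, by induction on a formula $\varphi$ of nesting depth $d \leq n$, that $K_n, t_n \models \varphi$ iff $K'_n, u_n \models \varphi$. The Boolean, Next, and classical until ($\quntil$) cases reduce to the $n$-bisimilarity of $(K_n, t_n)$ and $(K'_n, u_n)$, exactly as in the standard CTL invariance proof. The new ingredient is the treatment of the synchronization operators $\untila$, $\untile$, $F\forall$, and $G\exists$, which existentially quantify over a depth $k$ and then impose a uniform condition on all (resp.\ some) paths at depth $k$. For these one maintains the auxiliary invariant that, for every $k \geq 0$, the set of $\equiv_{d-1}$-classes of states reachable in exactly $k$ steps from $t_n$ equals the corresponding set from $u_n$, where $\equiv_{d-1}$ is the equivalence induced by depth-$(d-1)$ CTL+Sync formulas; combined with the induction hypothesis, this allows the witness depth $k$ to be transferred between the two structures.

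The main obstacle is arranging the per-depth equivalence invariant while still breaking $GF\forall p$ on the $u_n$ side. Intuitively, $GF\forall p$ is sensitive to which labels co-occur across different paths at the same depth, information that no depth-$(d-1)$ CTL+Sync formula can directly observe about a single state. The construction must therefore be tuned so that the individual states reachable at each depth look ``the same'' from both sides up to $\equiv_{d-1}$, while their collective labelling pattern along the depth axis differs only through the specific synchronization-breaking subtree, which would require nesting strictly greater than $n$ (or the $GF$ operator itself) to notice. Verifying this balance is the technical heart of the proof, and generalises the argument used in the Next-free case.
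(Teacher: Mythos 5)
Your overall strategy is the same as the paper's (a pair of structures for each nesting depth $n$, built from the template of \figurename~\ref{fig:FG}, followed by an induction on nesting depth), but the construction you propose to neutralize the Next operator does not work. Prepending a common padding prefix and adding bisimilar duplicates only protects the root: the until operators reach arbitrary depths without consuming any Next budget, so a formula such as $\true \euntil (\enext\enext p \land \enext\enext \lnot p)$, of fixed small nesting depth, jumps to the (arbitrarily deep) branching state on the $u$-side, which has both a $p$- and a $\lnot p$-descendant at exact distance two, whereas no state on the single-path side does; the padding depth is irrelevant, and bisimilar duplicates change nothing. The paper's fix is different and essential: it takes the $n$-stuttering of \emph{both entire structures} (every state replaced by a chain of $n$ equally-labelled copies), so that from \emph{every} state, not just the root, witnessing the colour divergence requires $n+1$ nested Next operators; the Boolean/Next/$\quntil$ cases are then handled by the stutter-invariance result of \cite{KS05}.

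The auxiliary invariant you propose for the synchronization operators --- that for every $k$ the set of $\equiv_{d-1}$-classes of states reachable in exactly $k$ steps is the same from both roots --- is self-defeating. Since $\equiv_{d-1}$ refines agreement on the proposition $p$, equality of these sets for all $k$ would make ``all states at depth $k$ satisfy $p$'' hold at exactly the same depths in both structures, hence $GF \forall p$ would hold in both or in neither, contradicting the goal. You acknowledge this tension in your final paragraph but do not resolve it. The paper's resolution is that no such per-depth invariant is needed: a formula $\varphi_1 \untila \varphi_2$ or $\varphi_1 \untile \varphi_2$ requires only \emph{one} witness depth, and in the alternating structures such a witness (when one exists) can always be chosen at or before the first occurrence of the ``other colour'', i.e., before the two branches of the $u$-side desynchronize, where the two structures still agree level by level. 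Only $GF\forall$, which demands infinitely many witness depths, is forced past the divergence point. Without replacing your invariant by an argument of this kind, the induction step for $\untila$ and $\untile$ does not go through.
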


\begin{proof}
We show that the formula $GF \forall p$ cannot be expressed in CTL+Sync, even using
the Next operator. To prove this, given an arbitrary CTL+Sync formula $\varphi$,
we construct two Kripke structures such that $\varphi$ holds in both Kripke structures,
but the formula $GF \forall p$ holds in one and not in the other. It follows that 
$\varphi$ is not equivalent to $GF \forall p$.

Given the formula $\varphi$, we construct the two Kripke structures as follows.
First, consider two Kripke structures whose unravelling is shown in \figurename~\ref{fig:FG}
where the states reachable from $t_1$ are satisfying alternately $p$ and $\lnot p$,
and the states reachable from $u_2$ and $u_5$ are satisfying alternately $\lnot p$ and $p$.
Call black states the states where $p$ holds, and white states the states where $\lnot p$ holds.
If $n$ is the maximum number of nested Next operators in $\varphi$, 
then we construct the $n$-stuttering of the two Kripke structures in \figurename~\ref{fig:FG},
where the $n$-stuttering of a Kripke structure $K = \tuple{T, \Pi, \pi, R}$
is the Kripke structure $K^n = \tuple{T \times \{1,\dots,n\}, \Pi, \pi^n, R^n}$ 
where $\pi^n(t,i) = \pi^n(t)$ for all $1 \leq i \leq n$, and 
the transition relation $R^n$ contains
all pairs $((t,i),(t,i+1))$ for all $t \in T$ and $1 \leq i < n$,
and all pairs $((t,n),(t',1))$ for all $(t,t') \in R$.

We claim that the formula $\varphi$ holds either in both $(t_1,1)$ and $(u_1,1)$,
or in none of $(t_1,1)$ and $(u_1,1)$, while the formula $GF \forall p$ holds in $(t_1,1)$ and not in $(u_1,1)$.
We show by induction on the nesting depth of CTL+Sync formulas $\varphi$ (that
have at most $n$ nested Next operators)
that $(t_1,i)$ and $(u_1,i)$ are equivalent for $\varphi$ (for all $1 \leq i \leq n$),
and that for all black states $t,u$, the copies $(t,1)$ and $(u,1)$ are equivalent for $\varphi$,
and analogously for all pairs of white states.

The result holds trivially for formulas of nesting depth $0$, that is atomic propositions.
For the induction step, assume the claim holds for formulas of nesting depth $k$,
and consider a formula $\varphi$ of nesting depth $k+1$. If the outermost operator
of $\varphi$ is a Boolean operator, or a CTL operator ($\qnext$ or $\quntil$), 
then the result follows from the induction hypothesis and the result of~\cite[Theorem~2]{KS05}
showing two paths that differ only in the number of consecutive repetitions of a state,
as long as the number of repetitions is at least $n$, are equivalent for the formulas 
with at most $n$ nested Next operators.
If the outermost operator of $\varphi$ is either $\untile$ or $\untila$, that is
$\varphi \equiv \varphi_1 \untile \varphi_2$ or $\varphi \equiv \varphi_1 \untila \varphi_2$,
then consider a state where $\varphi$ holds: either $\varphi_2$ holds in that state,
and by the induction hypothesis, $\varphi_2$ also holds in the corresponding state (that we
claimed to be equivalent),
or $\varphi_2$ holds in the states of the other color than the current state, 
and $\varphi_1$ holds on the path(s) at all positions before. By the induction 
hypothesis, at the same distance from the claimed equivalent states, 
we can find a state where $\varphi_2$ holds in all paths, and $\varphi_1$ holds on all positions before,
which concludes the proof for the induction step.
\end{proof}

\section{Distinguishing Power}\label{sec:distinguising}

Two states of a Kripke structure can be distinguished by a logic if there 
exists a formula in the logic that holds in one state but not in the other.
Each logic induces an indistinguishability relation (which is an equivalence)
on Kripke structures that characterizes the distinguishing power of the logic.
Two states $t$ and $t'$ of a Kripke structure $K$ are indistinguishable by 
a logic $\L$ if they satisfy the same formulas of $\L$, that is
$\{\varphi \in \L \mid K,t \models \varphi\} = \{\varphi \in \L \mid K,t' \models \varphi\}$.

For CTL (with the Next operator), the distinguishing power is standard bisimulation,
and for CTL without the Next operator, the distinguishing power is stuttering
bisimulation~\cite{BCG88}. Stuttering bisimulation is a variant of bisimulation
where intuitively several transitions can be used to simulate a single transition,
as long as the intermediate states of the transitions are all equivalent (for 
stuttering bisimulation). We omit the definition of bisimulation and
stuttering bisimulation~\cite{BCG88}, and in this paper we consider that they 
are defined as the distinguishing power of respectively CTL and CTL without 
the Next operator.

It is easy to show by induction on the nesting depth of formulas that the distinguishing 
power of CTL+Sync is the same as for CTL, since $(i)$ CTL+Sync contains CTL, and 
$(ii)$ if two states $t$ and $t'$
are bisimilar, there is a correspondence between the paths starting
from $t$ and the paths starting from $t'$ (for every path from $t$, there is a
path from $t'$ such that their states at position~$i$ are bisimilar, for all $i \in \nat$, and analogously for every path
of $t'$~\cite[Lemma 3.1]{BCG88}), which implies the satisfaction of the same formulas in CTL+Sync.
The same argument holds for CTL+Sync extended with unary temporal operators (Section~\ref{sec:extension}).

\begin{theorem}\label{theo:CTL+Sync-bisimulation}
Two states $t$ and $t'$ of a Kripke structure $K$ 
are indistinguishable by CTL+Sync formulas (even extended with unary temporal operators)
if and only if $t$ and $t'$ are bisimilar. 
\end{theorem}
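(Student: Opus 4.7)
The plan is to prove both directions separately, where the forward direction (indistinguishability implies bisimulation) is immediate and the converse is a structural induction on formulas that rests on a single path-correspondence lemma for bisimulation.

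For the forward direction, since CTL is syntactically contained in CTL+Sync (every CTL operator is one of $\qnext$ or $\quntil$), indistinguishability by CTL+Sync formulas implies indistinguishability by CTL formulas, and it is classical that the latter coincides with bisimulation~\cite{BCG88}. Hence $t$ and $t'$ are bisimilar.

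For the converse, assume $t \sim t'$ (bisimilar). I would invoke the standard path-correspondence property: for every path $\rho = t_0 t_1 \dots$ from $t$ there exists a path $\rho' = t'_0 t'_1 \dots$ from $t'$ with $t_i \sim t'_i$ for all $i \in \nat$, and symmetrically~\cite[Lemma~3.1]{BCG88}. The proof then proceeds by induction on the nesting depth of a CTL+Sync formula $\varphi$ (allowing also the $GF Q$ and $FG Q$ operators of Section~\ref{sec:extension}). Atomic propositions are handled by the labelling clause of bisimulation, Booleans and the $\qnext$ and $\quntil$ operators reduce to the classical CTL argument, and the induction hypothesis says that $s \sim s'$ implies $s \models \psi \iff s' \models \psi$ for every subformula $\psi$.

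The interesting cases are the quantifier-exchanged operators. For $\varphi_1 \untila \varphi_2$: if $t \models \varphi_1 \untila \varphi_2$ with witness depth $k$, then for any path $\rho'$ from $t'$, path correspondence yields a path $\rho$ from $t$ with $t_i \sim t'_i$ for all $i$; by the induction hypothesis applied at positions $j < k$ and at position $k$ we obtain $t'_j \models \varphi_1$ and $t'_k \models \varphi_2$, so the same $k$ witnesses $t' \models \varphi_1 \untila \varphi_2$. For $\varphi_1 \untile \varphi_2$: if $t \models \varphi_1 \untile \varphi_2$ with witness depth $k$, then for each $j < k$ the path from $t$ whose $j$th state satisfies $\varphi_1$ and whose $k$th state satisfies $\varphi_2$ corresponds via bisimulation to a path from $t'$ with the same property at positions $j$ and $k$ (by the induction hypothesis). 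The $GF$ and $FG$ operators are handled in exactly the same way, relying on the fact that path correspondence preserves bisimilarity pointwise at every depth, so the universal/existential quantification over depths and over paths transfers from $t$ to $t'$ and back.

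The main obstacle I foresee is purely presentational: one must be careful that the path correspondence is applied in the correct direction for each quantifier nesting (for the universal path quantifier we need the "for every $\rho'$ from $t'$ there exists $\rho$ from $t$" direction, while for the existential path quantifier we use the opposite direction), and that the depth $k$ is preserved on the nose rather than up to stuttering — which is precisely why the equivalence is full bisimulation and not stuttering bisimulation. Once this bookkeeping is correct, no further technical difficulty arises.
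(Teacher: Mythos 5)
Your proposal is correct and follows essentially the same route as the paper: one direction by the syntactic inclusion of CTL in CTL+Sync (so CTL+Sync-indistinguishability implies CTL-indistinguishability, i.e.\ bisimilarity), and the other by induction on nesting depth using the position-preserving path correspondence for bisimilar states~\cite[Lemma~3.1]{BCG88}, which is exactly what lets the witness depth $k$ transfer unchanged for the quantifier-exchanged operators and for the $GF$/$FG$ extensions. The paper only sketches this argument in the paragraph preceding the theorem; your write-up supplies the same argument with the case analysis made explicit.
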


Without the Next operator, the logic CTL+Sync has a distinguishing power that
lies strictly between bisimulation and stuttering bisimulation, as shown
by the examples in \figurename~\ref{fig:stut-bisimilar-distinguishable} 
and \figurename~\ref{fig:indistinguishable-not-bisimilar}.
Indistinguishability by CTL+Sync formulas without the Next operator implies
indistinguishability by standard CTL without the Next operator, and thus
stuttering bisimilarity. We obtain the following result.

\begin{theorem}\label{theo:CTL+Sync-without-next-indistinguishability}
The following implications hold for all states $t,t'$ of a Kripke structure $K$:
\begin{itemize}

\item if $t$ and $t'$ are bisimilar, then $t$ and $t'$ are indistinguishable 
by CTL+Sync formulas without the Next operator (even extended with unary temporal operators);

\item if $t$ and $t'$ are indistinguishable by CTL+Sync formulas without the Next operator
(even extended with unary temporal operators), then $t$ and $t'$ are stuttering bisimilar.

\end{itemize}
\end{theorem}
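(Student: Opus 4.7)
The plan is to prove both implications by a sublogic/containment argument, invoking Theorem~\ref{theo:CTL+Sync-bisimulation} for the first part and the classical characterization of stuttering bisimulation~\cite{BCG88} for the second.

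For the first implication, I would observe that CTL+Sync without the Next operator (extended with sequences of unary temporal operators) is syntactically a sublogic of CTL+Sync with the Next operator (extended with the same unary temporal operators). Hence, if $t$ and $t'$ satisfy exactly the same formulas of the larger logic, they in particular satisfy the same formulas of the smaller one. Thus Theorem~\ref{theo:CTL+Sync-bisimulation}, which states that bisimilarity coincides with indistinguishability by CTL+Sync with Next (even extended with unary temporal operators), immediately yields the first item: bisimilar states are indistinguishable by CTL+Sync formulas without Next.

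For the second implication, the key observation is that standard CTL without the Next operator is a sublogic of CTL+Sync without the Next operator (it corresponds to restricting path quantifiers to appear only in the order $Q\,\mathcal{U}$ and $\qnext$, and in our setting we simply drop the $\qnext$ case). Therefore, if $t$ and $t'$ are indistinguishable by CTL+Sync formulas without Next (even with the $FG$ and $GF$ operators), they are in particular indistinguishable by all standard CTL formulas without Next. By the classical characterization of Browne, Clarke, and Grumberg~\cite{BCG88}, indistinguishability by CTL without Next coincides with stuttering bisimilarity, so $t$ and $t'$ are stuttering bisimilar.

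The only point that deserves care is verifying that the extension with the unary temporal sequences $FG$ and $GF$ does not break the sublogic inclusions: in the first implication the extended logic with Next is still strictly richer than the extended logic without Next, and in the second implication standard CTL without Next (the reference logic for stuttering bisimulation) is still a fragment of CTL+Sync without Next extended with $FG$ and $GF$. Both are straightforward syntactic observations. No genuine obstacle is expected; the argument is essentially a two-line consequence of the earlier Theorem~\ref{theo:CTL+Sync-bisimulation} together with the standard logical characterization of stuttering bisimulation.
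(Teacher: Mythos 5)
Your proposal is correct and follows essentially the same route as the paper, which justifies the first item by restricting Theorem~\ref{theo:CTL+Sync-bisimulation} (bisimilarity implies indistinguishability by the full logic, hence by the Next-free fragment) and the second item by noting that Next-free CTL is a sublogic of Next-free CTL+Sync, so indistinguishability transfers and the classical characterization of~\cite{BCG88} yields stuttering bisimilarity. The sublogic inclusions you flag as the only delicate point are indeed immediate, exactly as the paper treats them.
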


It follows from the first part of Theorem~\ref{theo:CTL+Sync-without-next-indistinguishability}
that the state-space reduction techniques based on computing a bisimulation quotient before evaluating
a CTL formula will work for CTL+Sync. Although the exact indistinguishability
relation for CTL+Sync is coarser than bisimulation, we show that deciding this 
relation is NP-hard, and thus it may not be relevant to compute it for
quotienting before model-checking, but rather use the polynomial-time
computable bisimulation.

\begin{figure}[!tb]
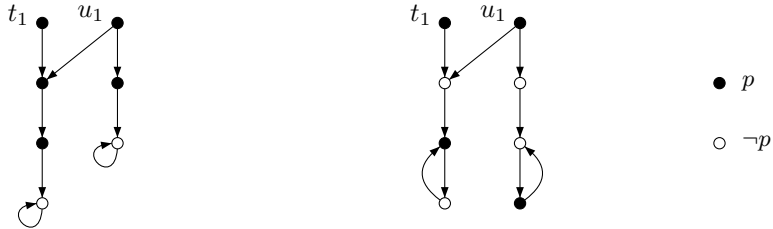
%
	\begin{center}
    \hrule
		\subfloat[The states $t_1$ and $u_1$ are stuttering bisimilar (they satisfy the same CTL formulas
without the Next operator), but they can be distinguished by the CTL+Sync formula
$p \untila \lnot p$ which holds in $t_1$ but not in $u_1$. \label{fig:stut-bisimilar-distinguishable}]{\input{figures/stut-bisimilar-distinguishable.tex}}
		\quad
		\subfloat[The states $t_1$ and $u_1$ are indistinguishable by CTL+Sync formulas,
but they are not bisimilar, i.e. they can be distinguished by CTL formulas 
with the Next operator, for example $\anext \anext p$ which holds
in $t_1$ but not in $u_1$. \label{fig:indistinguishable-not-bisimilar}]{\input{figures/indistinguishable-not-bisimilar.tex}}%
		\quad
		\subfloat{\input{figures/indistinguishable-caption.tex}}
	\hrule
	\smallskip
	\caption{The distinguishing power of CTL+Sync lies strictly between bisimulation and stuttering bisimulation. \label{fig:distinguishing}}%
	\end{center}
\end{figure}

\begin{theorem}\label{theo:CTL+Sync-without-next-NP-hard}
Deciding whether two states of a Kripke structure are indistinguishable 
by CTL+Sync formulas without the Next operator is NP-hard.
\end{theorem}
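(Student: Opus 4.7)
My plan is to reduce from $3$-SAT (or, equivalently, from the NP-hard model-checking problem for $F \forall q$ established in Lemma~\ref{lem:CTL+Sync-lower-bounds}) to the indistinguishability problem. Given a $3$-CNF formula $\psi$, I would construct a Kripke structure $K'$ containing two designated states $t_1$ and $t_2$ such that $\psi$ is satisfiable if and only if $t_1$ and $t_2$ are indistinguishable by CTL+Sync formulas without the Next operator.

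The backbone of the construction is the CRT-based clause-cycle gadget used in the NP-hardness proof of $F \forall q$: for each clause $c$ of $\psi$ involving variables $x_u, x_v, x_w$, I create a cycle of length $r_c = p_u p_v p_w$ and label position $i$ by $q$ iff the binary vector $(i \bmod p_u, i \bmod p_v, i \bmod p_w)$ encodes an assignment satisfying $c$. On top of this, I would craft $t_1$ and $t_2$ so that they are stuttering bisimilar by design. The most promising route is to make the two states look like two different but label-matched entry points into the same bank of clause-cycles, with ``padding'' arranged so that on every infinite path from $t_1$ there is a corresponding infinite path from $t_2$ visiting an equivalent sequence of labels up to stuttering, and vice versa. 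By the correspondence between CTL without Next and stuttering bisimulation, this immediately kills all standard CTL-without-Next distinguishing formulas, so the only way $t_1$ and $t_2$ can be separated is through a genuinely synchronization-based CTL+Sync formula.

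The two directions of the reduction then split as follows. If $\psi$ is unsatisfiable, there is no common position $k^*$ at which all clause-cycles simultaneously land on a $q$-labeled state, and I would exhibit an explicit distinguishing formula in the spirit of $F\forall q$ or $p\untila q$ that witnesses this: on $t_1$ the synchronization can be forced (since one of the two states is engineered to reduce the branching into the cycles to a single synchronized path, or to already satisfy the formula trivially), while on $t_2$ the CRT obstruction prevents any common $k$ from working. Conversely, if $\psi$ is satisfiable, a satisfying assignment corresponds to a common $k^*$ at which all cycles simultaneously present $q$, and this alignment must propagate upward through the formula: I would prove by induction on the nesting depth of CTL+Sync subformulas (without Next) that $t_1$ and $t_2$ satisfy exactly the same formulas. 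The base cases and Boolean/CTL operator cases follow from the stuttering bisimilarity, and the cases for $\untile, \untila, \euntil, \auntil$ would use that any candidate synchronization depth $k$ produces the same set of reachable subformula-labels at $t_1$ and at $t_2$, thanks to the uniformity provided by the CRT.

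The main obstacle is exactly this last inductive step: showing that, when $\psi$ is satisfiable, the synchronization alignment offered by the CRT is \emph{strong enough} to make every CTL+Sync subformula (not just $F\forall q$) behave identically on $t_1$ and $t_2$. The difficulty is that $\untile$ and $\untila$ nest existential path quantifiers with existential quantification over a common depth $k$, so I would need to argue that at every depth the sets $\{t \in R^k(t_1) : t \models \varphi_1\}$ and $\{t \in R^k(t_2) : t \models \varphi_1\}$ have matching projections under the label map. Here the precise shape of the padding and of the entry gadgets for $t_1$ and $t_2$ must be tuned carefully so that the CRT alignment extends from the labels $q$ used in the basic gadget to arbitrary subformula-labels appearing inductively, which is where the bulk of the technical work lies.
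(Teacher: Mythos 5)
Your high-level strategy matches the paper's: reduce from 3SAT via the clause-cycle gadget, arrange that one designated state always ``eventually synchronizes'' while the other does so iff $\psi$ is satisfiable, and prove indistinguishability in the satisfiable case by induction on nesting depth. But the proposal stops short of a proof at exactly the two places where the work is: the concrete shape of the second state's gadget, and the inductive argument. You explicitly defer both (``the precise shape of the padding and of the entry gadgets \dots must be tuned carefully \dots which is where the bulk of the technical work lies''), so what you have is a plan with the crux unresolved. Moreover, one of your two candidate constructions is problematic: if $t_1$ enters the bank of clause-cycles via a \emph{single} deterministic path, it generally cannot be stuttering bisimilar to the branching state $t_2$, since the cycle origins reached from $t_2$ in one step may carry different labels ($q$ versus $\lnot q$ depending on whether the all-false assignment satisfies the clause), and a single successor of $t_1$ cannot match both classes.

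The paper closes these gaps as follows. It takes two structures: $K_\psi$ (the $F\forall q$ gadget with initial state $t_I$) and a \emph{fixed} four-plus-one-state structure $K$ with initial state $u_I$ that has one branch alternating $\lnot q, q$ (a $2$-cycle) and one branch cycling $q,q,\lnot q$ (a $3$-cycle), so that $u_I \models F\forall q$ unconditionally (all paths are at $q$-states at depth $2$). A mild normalization of $\psi$ (adding one clause with a positive and a negative literal and one with only negative literals) guarantees $t_I$, like $u_I$, has both a $q$-successor and a $\lnot q$-successor. The decisive structural fact is that in both structures every non-initial state is \emph{deterministic} and its unique path alternates (up to stuttering) between $q$- and $\lnot q$-states; hence on non-initial states $\untile$ collapses to $\euntil$ and $\untila$ to $\auntil$, all non-initial $q$-states are mutually stuttering bisimilar, and likewise for $\lnot q$-states. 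The induction therefore only has to track three equivalence classes of states (initial, non-initial $q$, non-initial $\lnot q$), and the case analysis for $\quntil$, $\untila$, $\untile$ at the initial states becomes finite and checkable, with the satisfiability of $\psi$ entering only through the formula $F\forall q$ (and its variants $\varphi_1 \untila \varphi_2$ with $\varphi_2$ holding on the $q$-class). Without an analogous collapse to finitely many state classes, your inductive step over arbitrary subformula labels does not go through as stated.
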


\begin{proof}
The proof is by a reduction from the Boolean satisfiability problem 3SAT which is NP-complete~\cite{Cook71}.
Given a Boolean propositional formula $\psi$ in CNF, we construct 
two Kripke structures $K$ and $K_{\psi}$ that are indistinguishable (from their initial state)
if and only if $\psi$ is satisfiable, where:
\begin{itemize}
\item $K$ is the Kripke structure shown in \figurename~\ref{fig:indistinguishable}, and 
\item $K_{\psi}$ is the Kripke structure constructed in the NP-hardness proof of $F \forall q$ (Lemma~\ref{lem:CTL+Sync-lower-bounds}).
\end{itemize}

We assume that $\psi$ contains at 
least one clause with both a positive and a negative literal 
and one clause with only negative literals. This assumption 
induces no loss of generality because we can always add two
such clauses to a formula without changing its satisfiability 
status (e.g., by introducing new variables).
This assumption ensures that at least one successor of $t_I$ satisfies~$q$ 
and at least one successor of $t_I$ satisfies $\lnot q$, exactly like from $u_I$ in $K$. 
This ensures, for example, that $t_I$ and $u_I$ agree on the formula $\lnot q \untila q$ 
(which does not hold). 

We know from the proof of Lemma~\ref{lem:CTL+Sync-lower-bounds}
that $K_{\psi}, t_I \models F \forall q$ if and only if $\psi$ is satisfiable.
Hence, it suffices to show that $K$ and $K_{\psi}$ are indistinguishable
if and only if the formula $F \forall q$ holds in $t_I$. 
Since the formula $F \forall q$ holds in $u_I$, we only need to show that 
if $F \forall q$ holds in $t_I$, then $K$ and $K_{\psi}$ are indistinguishable.

To do this, we assume that $F \forall q$ holds in $t_I$, and we show that 
for all CTL+Sync formulas $\varphi$ without the Next operator, 
$t_I \models \varphi$ if and only if $u_I \models \varphi$.

First note that from every state $t$ of $K$ or $K_{\psi}$ that is not
an initial state ($t \neq t_I$ and $t \neq u_I$), the transitions are
deterministic (i.e., there is only one path starting from $t$), and it follows
that the existential path quantifiers is equivalent to the universal path quantifier
and they can be freely switched with other quantifiers
and put in front of formulas without changing their truth value. 
Hence the operators $\auntil$ and $\untila$ are equivalent, 
as well as $\euntil$ and $\untile$, and thus CTL+Sync is the same as just CTL in those states. 
Moreover, the sequence of states on those paths alternate (up to stuttering) 
between states that satisfy $q$ and states that satisfy $\lnot q$ (or vice versa).
Therefore, all such states that satisfy~$q$ are stuttering bisimilar and thus
satisfy all the same CTL+Sync formulas, and all such states that satisfy $\lnot q$ are stuttering 
bisimilar and thus also satisfy all the same CTL+Sync formulas~\cite{BCG88}.

\begin{figure}[!tb]
  \begin{center}
	\hrule
		\input{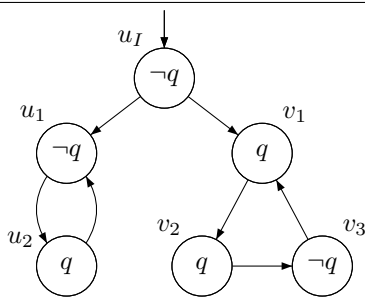}
	\hrule
	\smallskip
	\caption{The Kripke structure $K$ in the proof of Theorem~\ref{theo:CTL+Sync-without-next-NP-hard}. \label{fig:indistinguishable}}
  \end{center}
\end{figure}

Now we prove by induction on the structure of the CTL+Sync formulas $\varphi$ that 
the states $t_I$ and $u_I$ satisfy the same formulas, and thus $K$ and $K_{\psi}$ 
are indistinguishable. 
The result holds when $\varphi$ is an atomic proposition. 
We assume that the result holds for all CTL+Sync formulas
with at most $k$ nested operators, and we consider an arbitrary formula $\varphi$
with at most $k+1$ nested operators, and show that the result hold for $\varphi$.

Note that by the induction hypothesis, we can partition the states into three kinds 
such that all states of a kind satisfy the same formulas with at most $k$ nested operators:
the states of the first kind are the initial states $t_I$ and $u_I$,
the second kind is the states that are not initial and satisfy $\lnot q$,
and the third kind is the states that satisfy $q$ (which are also not initial).
Hence the formulas with at most $k$ nested operators can be classified
according to which kind of states satisfy them, thus in $2^3 = 8$ classes.
Therefore, the proof for the induction case boils down to a tedious but straightforward
case analysis, summarized as follows: 


\begin{itemize}
\item If $\varphi \equiv \lnot \varphi_1$ or $\varphi \equiv \varphi_1 \lor \varphi_2$,
then the claim holds easily by the induction hypothesis.

\item If $\varphi \equiv \varphi_1 \quntil \varphi_2$, then 

\begin{itemize}
\item either $\varphi_2$ holds in the states that satisfy $q$, and for example 
$\varphi_1$ holds in the states that satisfy $\lnot q$. Then $\varphi$
holds in all states, thus both in $t_I$ and in $u_I$;

\item or $\varphi_2$ holds in the states that satisfy $\lnot q$, and then $\varphi_2$
(and also $\varphi$) holds in all states that satisfy $\lnot q$, by the induction
hypothesis. Hence $\varphi$ holds both in $t_I$ and in $u_I$.
\end{itemize}

\item If $\varphi \equiv \varphi_1 \untila \varphi_2$, then the interesting case 
is when $\varphi_2$ holds in the states that satisfy $q$, and for example 
$\varphi_1$ holds in the states that satisfy $\lnot q$. By our assumption
on the Boolean CNF formula $\psi$, we have that $\varphi$ does not hold in 
$t_I$ nor in $u_I$. Other cases are treated analogously, in particular
for $\varphi_1 \equiv \true$, we have $\varphi \equiv F \forall q$, 
which holds in $t_I$ and in $u_I$.

\item If $\varphi \equiv \varphi_1 \untile \varphi_2$, the analysis is analogous to the previous case.

\end{itemize}

%
%
\end{proof}

\section{CTL* + Synchronization}\label{sec:CTL*}
CTL* is a branching-time extension of LTL (and of CTL) where several nested temporal 
operators and Boolean connectives can be used under the scope of a single
path quantifier. For example the CTL* formula $\exists(G \varphi \to G \psi)$ says that 
there exists a path in which either $\varphi$ does not hold in every position,
or $\psi$ holds at every position. Note that $\varphi$ and $\psi$ may also
contain path quantifiers.

Extending CTL+Sync with formula quantification analogous to CTL* 
presents some difficulties. Even considering only Boolean connectives and $\{F,G\}$ operators
leads to a logic that is hard to define. For example, one may consider
a formula like $(F p \lor F q) \forall$ which could be 
naturally interpreted as there exist two positions $m,n \geq 0$ such that 
on all paths $\rho$, either $p$ holds at position $m$ in $\rho$, or $q$ holds at
position $n$ in $\rho$. In this definition the $\lor$ operator would not be idempotent, 
that is $\psi_1 = (F p \lor F p) \forall$ is not equivalent to $\psi_2 = (F p) \forall$,
where $\psi_1$ means that the set of all paths can be partitioned into
two sets of paths where $p$ holds synchronously at some position, 
but not necessarily the same position in both sets, while $\psi_2$ expresses the
property that $p$ holds synchronously at some position in all paths. 

Another difficulty with binary operators is the semantics induced by the order of the operands. 
For instance, the formula $(F p \lor G q) \forall$
can be interpreted as $(i)$ there exists a position $m \geq 0$ such that for all positions $n \geq 0$,
on all paths $\rho$, either $\rho + m \models  p$ or $\rho + n \models q$;
or it can be interpreted as $(ii)$ for all $n \geq 0$, there exists $m \geq 0$ 
such that on all paths $\rho$, either $\rho + m \models  p$ or $\rho + n \models q$.
These two interpretations differ on the Kripke structure that produces exactly
two paths $\rho_1$ and $\rho_2$ such that $p$ and $q$ hold at the following 
positions ($p$ holds nowhere except at position $1$ in $\rho_1$ and position $3$
in $\rho_2$, and $q$ holds everywhere except position $2$ in $\rho_1$ and position $4$
in $\rho_2$): 
\begin{center}
\begin{tabular}{lccccccc}
in $\rho_1$: & $\{\bar p, q \}$ & $\{p, q \}$ & $\{\bar p, \bar q\}$ &  $\{\bar p, q \}$ &  $\{\bar p, q \}$ & $\{\bar p, q \}$ & \dots \\
in $\rho_2$: & $\{\bar p, q \}$ & $\{\bar p, q \}$ & $\{\bar p, q\}$ &  $\{p, q \}$ &  $\{\bar p, \bar q \}$ & $\{\bar p, q \}$ & \dots \\
             & 0 & 1 & 2 & 3 & 4 & 5 &  \\
\end{tabular}
\end{center} 
Note that the two paths agree on their initial position, and we can
construct a Kripke structure that produces exactly those two paths.
It is easy to see that the formula $(F p \lor G q) \forall$ does not hold 
according to the first interpretation (indeed, for $m=1$ we can take $n=4$ and consider the path $\rho_2$
where $p$ does not hold at position $1$ and $q$ does not hold at position $4$,
and for all other values of $m$, take $n=2$ and consider the path $\rho_1$
where $p$ does not hold at position $m$ and $q$ does not hold at position $2$), 
but it does hold according to the second interpretation
(for $n=2$ take $m=1$, for $n=4$ take $m=3$, and for all other values
of $n$ take arbitrary value of $m$, for example $m=n$).
The trouble is that the order of the existential
quantifier (associated to the left operand $F p$) and the universal quantifier
(associated to the right operand $G q$) actually matters in the semantics
of the formula, leading to an annoying situation that $(F p \lor G q) \forall$
is not equivalent to $(G q \lor F p) \forall$ in any of the interpretations.
One way could be to use the branching Henkin quantifiers, like ${\exists m} \choose {\forall n}$
where the existential choice of $m$ does not depend on the universal choice of $m$.
This interpretation suffers from lack of symmetry, as the negation of such 
a branching Henkin quantifier is in general not expressible as a 
branching Henkin quantifier~\cite{BG86}.


\section{Conclusion}

The logic CTL+Sync and its extensions presented in this paper provide an elegant 
framework to express non-regular properties of synchronization. 
It is intriguing that the exact optimal complexity of the model-checking problem remains open,
specially even for the fixed formula $p \untile q$ (which we show is in $\text{NP}^{\text{NP}}$,
and DP-hard).
Extending CTL+Sync to an elegant logic \emph{\`{a} la} CTL* seems challenging.
One may want to express natural properties with the flavor of synchronization, such as the 
existence of a fixed number of synchronization points, or the property that all 
paths synchronize in either of a finite set of positions, etc. (see also Section~\ref{sec:CTL*}).
Another direction is to consider alternating-time temporal logics (ATL~\cite{AHK02})
with synchronization. ATL is a game-based extension of CTL for which the 
model-checking problem remains in polynomial time. For instance, ATL can express the existence
of a winning strategy in a two-player reachability game. For the synchronized version of 
reachability games (where the objective for a player is to reach a target
state after a number of steps that can be fixed by this player, independently of the
strategy of the other player), it is known that deciding the winner is PSPACE-complete~\cite{DMS14a}.
Studying general game-based logics such as ATL or strategy logic~\cite{CHP10} 
combined with quantifier exchange is an interesting direction for future work.\medskip


{\bf Acknowledgment.} We thank Stefan G\"oller and anonymous reviewers for 
their insightful comments and suggestions.


\bibliographystyle{plain}
\bibliography{biblio} 

\begin{thebibliography}{10}

\bibitem{AHK02}
R.~Alur, T.~A. Henzinger, and O.~Kupferman.
\newblock Alternating-time temporal logic.
\newblock {\em Journal of the ACM}, 49:672--713, 2002.

\bibitem{BS96}
E.~Bach and J.~Shallit.
\newblock {\em Algorithmic Number Theory, Vol. 1: Efficient Algorithms}.
\newblock MIT Press, 1996.

\bibitem{BG86}
A.~Blass and Y.~Gurevich.
\newblock Henkin quantifiers and complete problems.
\newblock {\em Ann. Pure Appl. Logic}, 32:1--16, 1986.

\bibitem{BCG88}
M.~C. Browne, E.~M. Clarke, and O.~Grumberg.
\newblock Characterizing finite {K}ripke structures in propositional temporal
  logic.
\newblock {\em Theor. Comput. Sci.}, 59:115--131, 1988.

\bibitem{BH91}
S.~R. Buss and L.~Hay.
\newblock On truth-table reducibility to {SAT}.
\newblock {\em Inf. Comput.}, 91(1):86--102, 1991.

\bibitem{Cer64}
J.~{\donothing{Cerny}}\v{C}ern\'{y}.
\newblock Pozn\'{a}mka k. homog\'{e}nnym experimentom s kone\v{c}n\'{y}mi
  automatmi.
\newblock In {\em Matematicko-fyzik\'{a}lny \v{C}asopis}, volume 14(3), pages
  208--216, 1964.

\bibitem{CHP10}
K.~Chatterjee, T.~A. Henzinger, and N.~Piterman.
\newblock Strategy logic.
\newblock {\em Inf. Comput.}, 208(6):677--693, 2010.

\bibitem{CMS16}
D.~Chistikov, P.~Martyugin, and M.~Shirmohammadi.
\newblock Synchronizing automata over nested words.
\newblock In {\em Proc. of FOSSACS: Foundations of Software Science and
  Computation Structures}, LNCS 9634, pages 252--268. Springer, 2016.

\bibitem{CCGR00}
A.~Cimatti, E.~M. Clarke, F.~Giunchiglia, and M.~Roveri.
\newblock {NUSMV:} {A} new symbolic model checker.
\newblock {\em {STTT}}, 2(4):410--425, 2000.

\bibitem{CGP01}
E.~M. Clarke, O.~Grumberg, and D.~Peled.
\newblock {\em Model checking}.
\newblock {MIT} Press, 2001.

\bibitem{CFKMRS14}
M.~R. Clarkson, B.~Finkbeiner, M.~Koleini, K.~K. Micinski, M.~N. Rabe, and
  C.~S{\'{a}}nchez.
\newblock Temporal logics for hyperproperties.
\newblock In {\em Proceedings of POST: Principles of Security and Trust}, LNCS
  8414, pages 265--284. Springer, 2014.

\bibitem{CS10}
M.~R. Clarkson and F.~B. Schneider.
\newblock Hyperproperties.
\newblock {\em Journal of Computer Security}, 18(6):1157--1210, 2010.

\bibitem{Cook71}
S.~A. Cook.
\newblock The complexity of theorem proving procedures.
\newblock In {\em Proc. of STOC: Symposium on the Theory of Computing}, pages
  151--158. ACM Press, 1971.

\bibitem{DJLMS14}
L.~Doyen, L.~Juhl, K.~G. Larsen, N.~Markey, and M.~Shirmohammadi.
\newblock Synchronizing words for weighted and timed automata.
\newblock In {\em Proc. of FSTTCS: Foundations of Software Technology and
  Theoretical Computer Science}, LIPIcs, pages 121--132. Schloss Dagstuhl -
  Leibniz-Zentrum fuer Informatik, 2014.

\bibitem{DMS14a}
L.~Doyen, T.~Massart, and M.~Shirmohammadi.
\newblock Limit synchronization in {M}arkov decision processes.
\newblock In {\em Proc. of FoSSaCS: Foundations of Software Science and
  Computation Structures}, LNCS 8412, pages 58--72. Springer-Verlag, 2014.

\bibitem{DMS14b}
L.~Doyen, T.~Massart, and M.~Shirmohammadi.
\newblock Robust synchronization in {M}arkov decision processes.
\newblock In {\em Proc. of {CONCUR}: Concurrency Theory}, volume LNCS 8704,
  pages 234--248. Springer, 2014.

\bibitem{KLLS15}
J.~Kret{\'{\i}}nsk{\'{y}}, K.~G. Larsen, S.~Laursen, and J.~Srba.
\newblock Polynomial time decidability of weighted synchronization under
  partial observability.
\newblock In {\em Proc. of {CONCUR}: Concurrency Theory}, volume~42 of {\em
  LIPIcs}, pages 142--154. Schloss Dagstuhl - Leibniz-Zentrum fuer Informatik,
  2015.

\bibitem{KS05}
A.~Ku\v{c}era and Jan Strej\v{c}ek.
\newblock The stuttering principle revisited.
\newblock {\em Acta Inf.}, 41(7-8):415--434, 2005.

\bibitem{LLS14}
K.~G. Larsen, S.~Laursen, and J.~Srba.
\newblock Synchronizing strategies under partial observability.
\newblock In {\em Proc. of {CONCUR}: Concurrency Theory}, LNCS 8704, pages
  188--202. Springer, 2014.

\bibitem{Lenzi10}
G.~Lenzi.
\newblock Recent results on modal mu-calculus: a survey.
\newblock {\em Rend. Istit. Mat. Univ. Trieste}, 42(2):235--255, 2010.

\bibitem{Schnoebelen02}
P.~Schnoebelen.
\newblock The complexity of temporal logic model checking.
\newblock In {\em Advances in Modal Logic 2002}, pages 393--436. King's College
  Publications, 2003.

\bibitem{Spako05}
H.~Spakowski.
\newblock {\em Completeness for Parallel Access to {NP} and Counting Class
  Separations}.
\newblock PhD thesis, Heinrich-Heine-Universit\"at D\"usseldorf, 2005.

\bibitem{SM73}
L.~J. Stockmeyer and A.~R. Meyer.
\newblock Word problems requiring exponential time: Preliminary report.
\newblock In {\em Proc. of STOC: Symposium on Theory of Computing}, pages 1--9.
  {ACM}, 1973.

\bibitem{Thomas90}
W.~Thomas.
\newblock Automata on infinite objects.
\newblock In {\em Handbook of Theoretical Computer Science, Vol. {B:} Formal
  Models and Sematics}, pages 133--192. MIT Press, 1990.

\bibitem{Thomas97}
W.~Thomas.
\newblock Languages, automata, and logic.
\newblock In {\em Handbook of Formal Languages, Vol. 3: Beyond Words}, pages
  389--455. Springer, 1997.

\bibitem{Volkov08}
M.~V. Volkov.
\newblock Synchronizing automata and the {{\v C}}ern\'{y} conjecture.
\newblock In {\em Proc. of LATA: Language and Automata Theory and
  Applications}, LNCS 5196, pages 11--27. Springer, 2008.

\bibitem{Wagner87}
K.~W. Wagner.
\newblock More complicated questions about maxima and minima, and some closures
  of {NP}.
\newblock {\em Theor. Comput. Sci.}, 51:53--80, 1987.

\bibitem{Wolper83}
P.~Wolper.
\newblock Temporal logic can be more expressive.
\newblock {\em Information and Control}, 56(1/2):72--99, 1983.

\end{thebibliography}








\end{document}